\algnewcommand{\LineComment}[1]{\Statex \(\triangleright\) #1}
\newlength{\defbaselineskip}
\DeclareMathOperator{\e}{\bf e}
\DeclareMathOperator{\poly}{poly}
\DeclareMathOperator{\opt}{OPT}
\DeclareMathOperator{\start}{start}
\DeclarePairedDelimiter\ceil{\lceil}{\rceil}
\DeclarePairedDelimiter\dotp{\langle}{\rangle}
\newcommand\myeq{\mathrel{\overset{\makebox[0pt]{\mbox{\normalfont\tiny\sffamily def}}}{=}}}
\newcommand{\maxAi}{\norm[\infty]{A_{:i}}}
\newcommand{\deli}[1]{\nabla_i f_\mu(#1)}
\newtheorem{theorem}{Theorem}[section]
\newtheorem{observation}[theorem]{Observation}
\newtheorem{lemma}[theorem]{Lemma}
\theoremstyle{definition}
\newcommand{\argmax}{\operatornamewithlimits{argmax}}
\newcommand{\argmin}{\operatornamewithlimits{argmin}}
\newcommand{\R}{\mathbb{R}}
\newcommand{\E}{\mathbb{E}}
\newcommand{\vone}{\vec{1}}
\newcommand{\abs}[1]{\ensuremath{\left|#1\right|}}
\newcommand{\norm}[2][]{\ensuremath{\Vert #2 \Vert_{#1}}}
\begin{document}
\title{Unified Acceleration Method for Packing and Covering Problems via Diameter Reduction}
\date{}
\author{
  Di Wang
  \thanks{
  Department of Electrical Engineering and Computer Sciences,
  University of California at Berkeley,
  Berkeley, CA 94720.
  Email: wangd@eecs.berkeley.edu
  }
  \and
  Satish Rao
  \thanks{
  Department of Electrical Engineering and Computer Sciences,
  University of California at Berkeley,
  Berkeley, CA 94720.
  Email: satishr@berkeley.edu
  }
  \and
  Michael W. Mahoney
  \thanks{
  International Computer Science Institute
  and Department of Statistics,
  University of California at Berkeley,
  Berkeley, CA 94720.
  Email:  mmahoney@stat.berkeley.edu
}
}
 \maketitle
\begin{abstract}
The linear coupling method was introduced recently by Allen-Zhu and Orecchia~\cite{AO14} for solving convex optimization problems with first order methods, and it provides a conceptually simple way to integrate a gradient descent step and mirror descent step in each iteration. 
In the setting of standard smooth convex optimization, the method achieves the same convergence rate as that of the accelerated gradient descent method of Nesterov~\cite{Nesterov05}. 
The high-level approach of the linear coupling method is very flexible, and it has shown initial promise by providing improved algorithms for packing and covering linear programs~\cite{AO15-parallel,AO15-stochastic}. 
Somewhat surprisingly, however, while the dependence of the convergence rate on the error parameter $\epsilon$ for packing problems was improved to $O(1/\epsilon)$, which corresponds to what accelerated gradient methods are designed to achieve, the dependence for covering problems was only improved to  $O(1/\epsilon^{1.5})$, and even that required a different more complicated algorithm.
Given the close connections between packing and covering problems and since previous algorithms for these very related problems have led to the same $\epsilon$ dependence, this discrepancy is surprising, and it leaves open the question of the exact role that the linear coupling is playing in coordinating the complementary gradient and mirror descent step of the algorithm.
In this paper, we clarify these issues for linear coupling algorithms for packing and covering linear programs, illustrating that the linear coupling method can lead to improved $O(1/\epsilon)$ dependence for both packing and covering problems in a unified manner, i.e., with the same algorithm and almost identical analysis.
Our main technical result is a novel diameter reduction method for covering problems that is of independent interest and that may be useful in applying the accelerated linear coupling method to other combinatorial problems.
\end{abstract}

\section{Introduction}

A fractional covering problem, in its generic form, can be written as the following linear program (LP):
\begin{equation*}
\min_{x\geq 0}\{c^Tx:Ax\geq b\} ,
\end{equation*}
where $c\in \R^n_{\geq 0},b\in \R^m_{\geq 0}$, and $A\in \R^{m\times n}_{\geq 0}$. 
That is, we want to put weights on the $x_i$-s, for $i\in\left\{1,\ldots,n\right\}$, such that each $j\in\left\{1,\ldots,m\right\}$ is ``covered'' with weight at least $b_j$, where each unit of weight on $x_i$ puts $A_{ij}$ weight on each $j$, and we want to minimize the cost $c^Tx$ in doing so.
Without loss of generality, one can scale the coefficients, in which case one can write this LP in the standard form:
\begin{equation}\label{eq:covLP}
\min_{x\geq 0}\{\vone^Tx:Ax\geq \vone\}  ,
\end{equation}
where $A\in \R^{m\times n}_{\geq 0}$.
The dual of this LP, the fractional packing problem, can be written in this standard form as:
\begin{equation}\label{eq:packLP}
\max_{y\geq 0}\{\vone^Ty:Ay\leq \vone\} .
\end{equation}
We denote by $\opt$ the optimal value of the primal~\eqref{eq:covLP} (which is also the optimal value of the dual~\eqref{eq:packLP}).
In this case, we say that $x$ is a {\it $(1+\epsilon)$-approximation} for the covering LP if $Ax\geq \vone$ and $\vone^Tx\leq (1+\epsilon)\opt$, and we say that $y$ is a {\it $(1-\epsilon)$-approximation} for the packing LP if $Ay\leq \vone$ and $\vone^Ty\geq (1-\epsilon)\opt$.

Packing and covering problems are important classes of LPs with wide applications, and they have long drawn interest in computer science and theoretical computer science. 
Although one can use general LP solvers such as interior point method to solve packing and covering with convergence rate of $\log(1/\epsilon)$, such algorithms usually have very high per-iteration cost, as methods such as the computation of the Hessian and matrix inversion are involved. 
In the setting of large-scale problems, low precision iterative solvers are often more popular choices. 
Such solvers usually run in time with a nearly-linear dependence on the problem size, and they have $\poly(1/\epsilon)$ dependence on the approximation parameter. 
Most such work falls into one of two categories. 
The first category follows the approach of transforming LPs to convex optimization problems, then applying efficient first-order optimization algorithms. 
Examples of work in this category include~\cite{Nemirovski04,AHK12,Nesterov05,Renegar14,AO15-parallel,AO15-stochastic}, and all except~\cite{AO15-parallel,AO15-stochastic} apply to more general classes of LPs. 
The second category is based on the Lagrangian relaxation framework, and some examples of work in this category include~\cite{PST91,Fleischer04,LubyN93,Young01,Young14,KY14}.
For a more detailed comparison of this prior work, see Table $1$ in ~\cite{AO15-stochastic}.
Also, based on whether the running time depends on the width $\rho$, a parameter which typically depends on the dimension and the largest entry of $A$, these algorithms can also be divided into width-dependent solvers and width-independent solvers. 
Width-dependent solvers are usually pseudo-polynomial, as the running time depends on $\rho\opt$, which itself can be large, while width-independent solvers are more efficient in the sense that they provide truly polynomial-time approximation solvers. 

In this paper, we describe a solver for covering LPs of the form~\eqref{eq:covLP}.
The solver is width-independent,%
\footnote{More precisely, our method has a logarithmic dependence on the width, but by Observation~\ref{ob:smallratio} below, this cannot be worse than $\log(nm/\epsilon)$, and thus we consider it as width-independent.} and it is a first-order method with a linear rate of convergence.
That is, if we let $N$ be the number of non-zeros in $A$, then the running time of our algorithm is at worst $O\left(N\frac{\log^2(N/\epsilon)\log(1/\epsilon)}{\epsilon}\right)$.
To simplify the following discussion, we will follow the standard practice of using $\tilde{O}$ to hide poly-log factors, in which case the running time of our algorithm for the covering problem is at worst $\tilde{O}\left(N/\epsilon\right)$.
Among other things, our result is an improvement over the recent bound of $\tilde{O}(N/\epsilon^{1.5})$ provided by Allen-Zhu and Orecchia for the covering problem using a different more complicated algorithm~\cite{AO15-stochastic}, and our result corresponds to the linear rate of convergence that accelerated gradient methods are designed to achieve~\cite{Nesterov05}. 

At least as interesting as the $\tilde{O}(1/\epsilon^{0.5})$ improvement for covering LPs, however, is the context of this problem and the main technical contribution that we developed and exploited to achieve our improvement.
\begin{itemize}
\item
The context for our results has to do with the linear coupling method that was introduced recently by Allen-Zhu and Orecchia~\cite{AO14}.
This is a method for solving convex optimization problems with first order methods, and it provides a conceptually simple way to integrate a gradient descent step and mirror descent step in each iteration. 
In the setting of standard smooth convex optimization, the method achieves the same convergence rate as that of the accelerated gradient descent method of Nesterov~\cite{Nesterov05}, and indeed the former can be viewed as an insightful reinterpretation of the latter. 
The high-level approach of the linear coupling method is very flexible, and it has shown initial promise by providing improved algorithms for packing and covering LPs~\cite{AO15-parallel,AO15-stochastic}. 

The particular motivation for our work is a striking discrepancy between bounds provided for packing and covering LPs in the recent result of Allen-Zhu and Orecchia in~\cite{AO15-stochastic}.
In particular, they provide a $(1-\epsilon)$-approximation solver for the packing problem in $\tilde{O}(N/\epsilon)$, but they are only able to obtain $\tilde{O}(N/\epsilon^{1.5})$ for the covering problem, and for that they need to use a different more complicated algorithm. 
This discrepancy between results for packing and covering LPs is rare, due to the duality between them, and it leaves open the question of the exact role that the linear coupling is playing in coordinating the complementary gradient and mirror descent step of the algorithms for these dual problems. 
\item
Our main technical contribution is a novel diameter reduction method for fractional covering LPs that helps resolve this discrepancy. 
Recall that the smoothness parameter, e.g., Lipschitz constant, and the diameter of the feasible region are the two most natural limiting factors for most gradient based optimization algorithms. 
Indeed, many applications of general first-order optimization techniques can be attributed to the existence of norms or proximal setups for the specific problems that gives both good smoothness and diameter properties. 
In the particular case of coordinate descent algorithms based on the linear coupling idea, we additionally need good coordinate-wise diameter properties to achieve accelerated convergence. 

This is easy to accomplish for packing problems, but it is not easy to do for covering problems, and this is this difference that leads to the $\tilde{O}(1/\epsilon^{0.5})$ discrepancy between packing and covering algorithms in previous work~\cite{AO15-stochastic}.
Our diameter reduction method for general covering problems is straightforward, and it gives both good diameter bounds with respect to the canonical norm for accelerated stochastic coordinate descent (as is needed generally~\cite{Nesterov12,AO15-stochastic}) as well as good coordinate-wise diameter bounds (as is needed for linear coupling~\cite{AO15-stochastic}). 
Thus, it is likely of interest more generally for combinatorial optimization problems.
\end{itemize}
Once the diameter reduction is achieved, the remaining work is mainly straightforward, as we can directly apply known optimization schemes that work well for problems with good diameter properties.
In particular, by using the scheme from~\cite{AO15-stochastic} that was developed for packing LPs, we obtain improved $\tilde{O}\left(N/\epsilon\right)$ results for covering LPs; and this provides a unified acceleration method (unified in the sense that it is with the same algorithm and almost identical analysis) for both packing and covering LPs.

We will start in Section~\ref{sxn:high-level} with a description of some of the challenges in applying acceleration techniques in a unified way to these two dual problems, including those that limited previous work.
Then, in Section~\ref{sxn:diam-reduction} we will present our main technical contribution, a novel diameter reduction method for any covering LP of the form given in~\eqref{eq:covLP}.
Finally, in Section~\ref{sxn:solver} we describe how to combine this with previous work to obtain a unified acceleration method for packing and covering problems.
We include a full description of the latter analysis, with some of the details deferred to Appendix~\ref{App:proofs}.

\section{High-level Description of Challenges}
\label{sxn:high-level}

At a high level, we (as well as Allen-Zhu and Orecchia~\cite{AO15-parallel,AO15-stochastic}) use the same two-step approach of Nesterov~\cite{Nesterov05}. 
The first step involves smoothing, which transforms the constrained problem into a {\it smooth} objective function with trivial or no constraints. 
By smooth, we mean that the gradient of the objective function has some property in the flavor of Lipschitz continuity. 
Once smoothing is accomplished, the second step uses one of several first order methods for convex optimization in order to obtain an approximate solution. 
Examples of standard application of this approach to covering LPs includes the width-dependent solvers of~\cite{Nesterov05,Nemirovski04} as well as multiplicative weights update solvers~\cite{AHK12}.

The first width-independent result following the optimization approach in~\cite{AO15-parallel} achieves width-independence by truncating the gradient, thus effectively reducing the width to $1$. 
The algorithm uses, in a white-box way, the coupling of mirror descent and gradient descent from~\cite{AO14}, which can be viewed as a re-interpretation of Nesterov's accelerated gradient method~\cite{Nesterov05}. 
However, although~\cite{AO15-parallel} uses a coupling of mirror descent and gradient descent, the role of gradient descent is only for width-independence, i.e., to cover the loss incurred by the large component of the gradient (see Eqn.~\eqref{lossregret} below for the precise formulation of this loss), and it is independent of the mirror descent part acting on the truncated gradient. 
In addition,~\cite{AO15-parallel} deviates from the canonical smoothing with entropy, as it instead uses generalized entropy. 
Importantly, the objective function to be minimized is \emph{not} smooth in the standard Lipschitz continuity sense, but it does satisfy a similar local Lipschitz property. 

To improve the sequential packing solver in~\cite{AO15-parallel} with convergence $\tilde{O}(1/\epsilon^3)$ to $\tilde{O}(1/\epsilon)$, the same authors  in~\cite{AO15-stochastic} apply a stochastic coordinate descent method based on the linear coupling idea. 
Barring the difference between Lipschitz and local Lipschitz continuity, the results in~\cite{AO15-stochastic} can be viewed as a variant of accelerated coordinate descent method~\cite{Nesterov12}. 
There are two places where the algorithm achieves an improvement over prior packing-covering results.
\begin{itemize}
\item
One factor of improvement is due to the better coordinate-wise Lipschitz constant over the full dimensional Lipschitz constant. 
Intuitively, in the case of packing or covering, the gradient of variable $x_i$ depends on the penalties of constraints involving $x_i$, which further depend on all the variables in those constraints. 
As a result, if we move all the variables simultaneously, we can only take a small step before changing the gradient of $x_i$ drastically. 
\item
The other factor of improvement comes from accelerating the gradient method. 
The role of gradient descent in the packing solver of~\cite{AO15-stochastic} is twofold. 
First, it covers the loss incurred by the large component of the gradient as in~\cite{AO15-parallel} to give width-independence. 
Second, to accelerate the coupling as in~\cite{AO14}, the gradient descent also needs to cover the regret term incurred by the mirror descent step (see Eqn.~\eqref{lossregret} below for the precise formulation of this regret). 
The adoption of $A$-norm (defined in Eqn.~\eqref{eq:anorm} below) enables the acceleration.
This $A$-norm works particularly well for packing problems, in the sense that it easily leads to good diameter bounds: since the packing constraints impose a naive upper bound of $x^*_i\leq 1/\norm[\infty]{A_{:i}}$ on each variable, thus the feasible region has a small diameter $\max_{x:f(x)\leq f(x_0)}\norm[A]{x-x^*}$. 

The importance of the small diameter is twofold. 
First, the diameter naturally arises in the convergence bound of gradient based methods, so we always need to use a norm or proximal setup giving small diameter to achieve good convergence. 
Second, and more importantly, in this case the small diameter $[0,1/\norm[\infty]{A_{:i}}]$ on each coordinate relates the mirror descent step length and the gradient descent step length. 
As the regret term in mirror descent and the improvement of gradient descent step are both proportional to their respective step lengths, the small coordinate-wise diameter makes it possible to use gradient descent improvement to cover the mirror descent regret.
\end{itemize}
The combination of gradient truncation, stochastic coordinate descent, and acceleration due to small diameter in $A$-norm leads to the $\tilde{O}(N/\epsilon)$ solver for the packing LP~\cite{AO15-stochastic}. 

Shifting to solvers for the covering LP, one obvious obstacle to reproducing the packing result is we no longer have the small diameter in $A$-norm.
Indeed, a naive coordinate-wise upper bound from the covering constraints only gives $x^*_i\leq 1/\min_j\{A_{ji}:A_{ji}> 0\}$. 
Because of this, the covering solver in~\cite{AO15-stochastic} instead use the proximal setup in their earlier work~\cite{AO15-parallel}. 
The particular proximal setup gives a good diameter for the feasible region they use, but it doesn't give a similarly good coordinate-wise diameter to enable the acceleration. 
To improve upon the $O(1/\epsilon^2)$ convergence of standard mirror descent, the authors use a negative-width technique as in~\cite{AHK12} (Theorem $3.3$ with $l=\sqrt{\epsilon}$).
This then leads to the (improved, but still worse than for packing) $\tilde{O}(1/\epsilon^{1.5})$ convergence rate. 
In addition, since they truncate the gradient at a smaller threshold to cover the loss incurred by the large component, they need a more complicated gradient step, leading to a more complicated algorithm than for the packing LP.

To get an $\tilde{O}(1/\epsilon)$ solver for the covering LP, it seems crucial to relate the gradient descent step and mirror descent step the same way as in the packing solver in~\cite{AO15-stochastic}. 
Thus, we will stick with the $A$-norm, and we will work directly to reduce the diameter. 
Our main result (presented next in Section~\ref{sxn:diam-reduction}) is a general diameter reduction method to achieve the same diameter property as in the packing solver, and this enables us (in Section~\ref{sxn:solver}) to extend all the crucial ideas of the packing solver in~\cite{AO15-stochastic}, as outlined in this section, to get a covering solver with running time $\tilde{O}(N/\epsilon)$.

\section{Diameter Reduction Method for General Covering Problems}
\label{sxn:diam-reduction}

Given any covering LP of the form given in~\eqref{eq:covLP}, characterized by a matrix $A$, we formulate an equivalent covering LP with good diameter properties. 
This will involve adding variables and redundant constraints.
We use $i\in [n]$ to denote the indices of the variables (i.e., columns of $A$) and $j\in [m]$ to denote the indices of constraints (i.e., rows of $A$). 
For ease of comparison with~\cite{AO15-stochastic}, and since our unified approach for both packing and covering uses their packing solver and a similar analysis, we use the same notation whenever possible.

For any $i\in [n]$, let 
$$
r_i\myeq\frac{\max_j\{A_{ji}:A_{ji}> 0\}}{\min_j\{A_{ji}:A_{ji}> 0\}} ,
$$
be the ratio between the largest non-zero coefficient and the smallest non-zero coefficient of variable $x_i$ in all constraints, and let $n_i\myeq\ceil*{\log r_i}$.
We first duplicate each original variable $n_i$ times to obtain $\bar{x}_{(i,l)}, i\in [n],l\in [n_i]$ as the new variables. 
In terms of the coefficient matrix, we now have a new matrix, call it $\bar{A}\in \R^{m\times (\sum_i n_i)}_{\geq 0}$, which contains $n_i$ copies of the $i$-th column $A_{:i}$. 
We denote a column of $\bar{A}$ by the tuple $(i,l)$ with $l\in [n_i]$. 
Obviously, the covering LP given by $\bar{A}$ is equivalent to the original covering LP given by $A$. 
Adding additional copies of variables, however, will allow us to improve the diameter.
To reduce the diameter of this new covering LP, we further decrease some of the coefficients in $\bar{A}$, and we put upper bounds on the variables. 
In particular, for $j,i,l$, we have
\begin{equation}
\label{eq:barA}
\bar{A}_{j,(i,l)}=\min\{A_{j,i},2^l\min_j\{A_{ji}:A_{ji}> 0\}\}   ,
\end{equation}
and for variable $\bar{x}_{(i,l)}$, we add the constraint
\begin{equation}\label{eq:upperbound}
\bar{x}_{(i,l)}\leq \frac{2}{2^{l}\min_j\{A_{ji}:A_{ji}> 0\}}   .
\end{equation}

The next lemma shows that the covering LP given by $\bar{A}$ and the covering LP given by $A$ are equivalent.
\begin{lemma}
\label{lemma:equivLP}
Let $\opt$ be the optimal value of the covering LP given by $A$, and let $\overline{\opt}$ be the optimal of the covering LP given by $\bar{A}$ and~\eqref{eq:upperbound}, as constructed above; then $\opt=\overline{\opt}$.
\end{lemma}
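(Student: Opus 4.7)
I will establish both sides of the equality by explicit feasibility-preserving, cost-preserving constructions between the two LPs. The easy direction $\opt \leq \overline{\opt}$ is immediate: given any $\bar{x}$ feasible for the modified LP, I collapse copies by setting $x_i \myeq \sum_{l=1}^{n_i} \bar{x}_{(i,l)}$; since $\bar{A}_{j,(i,l)} \leq A_{ji}$ by~\eqref{eq:barA}, this gives $(Ax)_j \geq (\bar{A}\bar{x})_j \geq 1$ and $\vone^T x = \vone^T \bar{x}$.

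For $\overline{\opt} \leq \opt$, I start from an optimal $x^*$ of~\eqref{eq:covLP} and build a feasible $\bar{x}$ of the same cost. Writing $\alpha_i \myeq \min_j\{A_{ji}:A_{ji}>0\}$, a preliminary reduction ensures $x_i^* \leq 1/\alpha_i$: whenever $x_i^* > 1/\alpha_i$ one has $A_{ji} x_i^* \geq A_{ji}/\alpha_i \geq 1$ for every $j$ with $A_{ji}>0$, so $x_i$ alone was already covering those constraints and may be capped at $1/\alpha_i$ without breaking feasibility. I then concentrate the mass of each $x_i^*$ on a single level $l_i \myeq \min\{n_i,\ \floor{1 - \log_2(\alpha_i x_i^*)}\}$, setting $\bar{x}_{(i,l_i)} = x_i^*$ and all other copies to zero. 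The bound $x_i^* \leq 1/\alpha_i$ makes $l_i \geq 1$, the choice of $l_i$ ensures $2^{l_i}\alpha_i x_i^* \leq 2$ so that~\eqref{eq:upperbound} holds, and $\vone^T \bar{x} = \vone^T x^*$ is automatic.

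The heart of the argument is checking feasibility of $\bar{x}$. Fix a constraint $j$. For each $i$ with $x_i^* > 0$, one of two things occurs: either $A_{ji} \leq 2^{l_i}\alpha_i$, so $\bar{A}_{j,(i,l_i)} = A_{ji}$ and $i$'s contribution equals the original $A_{ji} x_i^*$; or $A_{ji} > 2^{l_i}\alpha_i$, in which case $l_i$ must be strictly less than $n_i$ (because $2^{n_i}\alpha_i \geq r_i \alpha_i \geq A_{ji}$ rules this out at $l_i = n_i$), forcing the floor in the definition of $l_i$ to be active and yielding $2^{l_i}\alpha_i x_i^* > 1$. Thus a single truncated term already saturates constraint $j$; otherwise every contribution matches the original one and $(\bar{A}\bar{x})_j = (Ax^*)_j \geq 1$.

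The main obstacle is choosing $l_i$ so that the coordinate upper bound~\eqref{eq:upperbound} and the covering-sufficient contribution coexist. The slack factor $2$ in the numerator of~\eqref{eq:upperbound} is precisely what makes this possible: it absorbs the rounding loss from the floor in $l_i$ while guaranteeing that a single truncated column delivers a contribution of at least $1$ to any constraint that was over-covered originally. Once $l_i$ is correctly calibrated, the remainder of the proof is a brief per-constraint dichotomy.
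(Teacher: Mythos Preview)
Your proof is correct and follows essentially the same route as the paper's. Both directions match: collapsing copies for $\opt \leq \overline{\opt}$, and for $\overline{\opt} \leq \opt$ placing all of $x_i$ on a single level $l_i$ chosen as the largest index with $x_i \leq 2/(2^{l_i}\alpha_i)$---your explicit formula $l_i = \min\{n_i,\ \floor{1 - \log_2(\alpha_i x_i^*)}\}$ is exactly that index---followed by the same per-constraint dichotomy (truncated column alone saturates, or all contributions match the original).
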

\begin{proof}
Given any feasible solution $\bar{x}$, consider the solution $x$ where $x_i=\sum_{l=1}^{n_i}\bar{x}_{(i,l)}$. It is obvious $\vone^Tx=\vone^T\bar{x}$, and $Ax\geq \vone$, as coefficients in $\bar{A}$ are no larger than coefficients in $A$. 
Thus $\opt \leq \overline{\opt}$.

For the other direction, consider any feasible $x$. 
For each $i$, we can assume without loss of generality that 
$$
x_i\leq \frac{1}{\min_j\{A_{ji}:A_{ji}> 0\}}   .  
$$
Let $l_i$ be the largest index such that 
$$
x_{i}\leq \frac{2}{2^{l_i}\min_j\{A_{ji}:A_{ji}> 0\}}   ,
$$ 
and then let
\[
\bar{x}_{(i,l)}=\left\{\begin{array}{lr}
x_i & \text{if }l=l_i\\
0  & \text{if }l\neq l_i \end{array} \right.  .
\]

By construction, $\bar{x}$ satisfies all the upper bounds described in~\eqref{eq:upperbound}. Furthermore, for constraint $j$, we must have $\bar{A}_{j:}\bar{x}\geq 1$. Since for any $i$, $\bar{A}_{j,(i,l_i)}$ differs from $A_{ji}$ only when $A_{ji}>2^{l_i}\min_j\{A_{ji}:A_{ji}> 0\}$, and we must have $l_i<n_i$ in this case by definition of $n_i$, which gives $\bar{x}_{(i,l_i)}=x_{i}\geq \frac{1}{2^{l_i}\min_j\{A_{ji}:A_{ji}> 0\}}$ by our choice of $l_i$ being the largest possible. 
Then we know $\bar{A}_{j,(i,l_i)}=2^{l_i}\min_j\{A_{ji}:A_{ji}> 0\}$, so the $j$-th constraint is satisfied. 
Thus $\opt\geq \overline{\opt}$, and we can conclude $\opt=\overline{\opt}$.
\end{proof}

Given that we have shown that the covering LP defined by $\bar{A}$ and that defined by $A$ are equivalent, we now point out that the seemingly-redundant constraints of~\eqref{eq:upperbound} turn out to be crucial.
The reason is that the feasible region now has a small diameter in the coordinate-wise weighted $2$-norm $\norm[A]{\cdot}$.
In particular, we can rewrite the constraints~\eqref{eq:upperbound} to be 
$$
\bar{x}_{(i,l)}\leq \frac{2}{\norm[\infty]{\bar{A}_{:(i,l)}}}   .
$$ 
For any $i$, this is the same upper bound on $\bar{x}_{(i,l)}$ for $l<n_i$ (consider the row $j^*=\argmax_j\{A_{ji},A_{ji}> 0\}$), and it is a relaxation on $\bar{x}_{(i,n_i)}$.

The price we pay for this diameter improvement is that the new LP defined by $\bar{A}$ is larger than that defined by $A$. 
Two comments on this are in order.
First, by Observation~\ref{ob:smallratio} below, $r_i$ is bounded by $n^2m/\epsilon^2$, and so the diameter reduction step only increases the problem size by $O(\log(mn/\epsilon))$.
Second, we have presented our diameter reduction as an explicit pre-processing step so we can use one unified optimization algorithm (Algorithm~\ref{alg:ascd} below) for both packing and covering, but in practice the diameter reduction would not have to be carried out explicitly. 
It can equivalently be implemented implicitly within the algorithm (a trivially-modified version of Algorithm~\ref{alg:ascd} below) by randomly choosing a scale after picking the coordinate $i$ and then computing $\bar{A}_{j,(i,l)}$ in~\eqref{eq:barA} by shifting bits on the fly. 

Given this reduction, in the rest of the paper, when we refer to the covering LP, we will implicitly be referring to the diameter reduced version, and we have the additional guarantee that there exists an optimal solution $x^*$ to ~\eqref{eq:covLP} such that
\begin{equation}\label{eq:diameter}
0\leq x^*_i \leq \frac{2}{\maxAi} \quad \forall i\in [n]   .
\end{equation}

\section{An Accelerated Solver for (Packing and) Covering LPs}
\label{sxn:solver}


In this section, we will present our solver for covering LPs of the form~\eqref{eq:covLP}.
To motivate this, recall that for packing problems of the form~\eqref{eq:packLP}, bounds of the form~\eqref{eq:diameter} automatically follow from the packing constraints $Ax\leq \vone$. 
For readers familiar with the packing LP solver in~\cite{AO15-stochastic}, it should be plausible that---once we have this diameter property---the same stochastic coordinate descent optimization scheme will lead to a $\tilde{O}(N/\epsilon)$ covering LP solver. 
We now show that indeed the same optimization algorithm for packing LPs can be easily extended to solving covering LPs, thus establishing a unified acceleration method for packing and covering problems.

In Section~\ref{sxn:solver-prelim}, we'll present some preliminaries and describe how we perform smoothing on the original covering objective function; and then
in Section~\ref{sxn:solver-accelerated}, we'll present our main algorithm.
This algorithm involves a mirror descent step, that will be described in Section~\ref{sxn:solver-mirror}, a gradient descent step, that will be described in Section~\ref{sxn:solver-gradient}, and a careful coupling between the two, that will be described in Section~\ref{sxn:solver-coupling}.
Finally, in Section~\ref{sxn:solver-starting}, we will describe how to ensure we start at a good starting point.
Some of the following results are technically-tedious but conceptually-straightforward extensions of analogous results from~\cite{AO15-stochastic}, and some of the results are restated from~\cite{AO15-stochastic}; for completeness, we provide the proof of all of these results, with the latter relegated to Appendix~\ref{App:proofs}.

\subsection{Preliminaries and Smoothing the Objective}
\label{sxn:solver-prelim}

To start, let's assume  that
\[
\min_{j\in [m]}\norm[\infty]{A_{j:}}=1   .
\]
This assumption is without loss of generality: since we are interested in multiplicative $(1+\epsilon)$-approximation, we can simply scale $A$ for this to hold without sacrificing approximation quality.
With this assumption, the following lemma holds.
(This lemma is the same as Proposition $C.2.(a)$ in~\cite{AO15-stochastic}, and its proof is included for completeness in Appendix~\ref{App:proofs}.)
\begin{restatable}{lemma}{optrange}
$\opt\in [1,m]$
\end{restatable}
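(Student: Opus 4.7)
The plan is to bound $\opt$ from below and above separately, exploiting the normalization assumption $\min_{j\in[m]}\|A_{j:}\|_\infty = 1$ in two complementary ways.

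For the lower bound $\opt \geq 1$, I would pick a witness row $j^\star$ achieving $\|A_{j^\star:}\|_\infty = 1$ (such a row exists by the minimum being attained). Every entry of this row then satisfies $A_{j^\star,i} \in [0,1]$. For any feasible $x \geq 0$, the constraint $(Ax)_{j^\star}\geq 1$ combined with $A_{j^\star,i}\leq 1$ immediately gives $\vone^T x \geq \sum_i A_{j^\star,i} x_i = (Ax)_{j^\star}\geq 1$, so the optimum is at least $1$.

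For the upper bound $\opt \leq m$, the natural strategy is to construct an explicit feasible $x$ with $\vone^T x \leq m$. Since $\|A_{j:}\|_\infty \geq 1$ for every row $j$, I can pick for each $j \in [m]$ an index $i_j \in \argmax_i A_{j,i}$, which satisfies $A_{j,i_j} \geq 1$. Setting $x_i = 1$ whenever $i \in \{i_1,\ldots,i_m\}$ and $x_i = 0$ otherwise yields a vector with at most $m$ ones, hence $\vone^T x \leq m$, while each constraint $j$ is satisfied because $(Ax)_j \geq A_{j,i_j} x_{i_j} = A_{j,i_j} \geq 1$.

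There is no real obstacle here; this lemma is a normalization sanity check that ensures $\log(\opt)$ is only $O(\log m)$, which is needed later when $\opt$ appears inside logarithmic factors in the running time. The only minor point to be careful about is that the indices $i_j$ need not be distinct, but this only helps the cost bound, since the support size of the constructed $x$ is $|\{i_1,\ldots,i_m\}| \leq m$.
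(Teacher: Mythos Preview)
Your proof is correct and follows essentially the same approach as the paper: the lower bound via the row attaining $\|A_{j^\star:}\|_\infty=1$ is identical, and the upper bound uses the same observation that every row has an entry $\geq 1$. Your upper-bound construction is in fact slightly tighter than the paper's, which simply takes $x=\vone$ (yielding cost $n$ rather than $m$); your $m$-sparse witness matches the stated bound $\opt\leq m$ exactly.
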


With $\opt$ being at least $1$, the error we introduce later in the smoothing step will be small enough that the smoothing function approximates the covering LP well enough with respect to $\epsilon$ around the optimum. 

\begin{observation}
\label{ob:smallratio}
Since we are interested in a $(1+\epsilon)$-approximation, then with the above assumption, we can also eliminate the very small and very large entries from the matrix as follows. 
If some entry $A_{ji}\leq \epsilon/(mn)$, then since $\opt\leq m$ we have that $A_{ji}x^*_i\leq \epsilon/n$, and so we can just increase each variable by $\epsilon/n$, in which case we can recover the loss from setting $A_{ji}$ equal to $0$ from the variable in the $j$-th constraint with coefficient at least $1$. 
On the other hand, if some entry $A_{ji}\geq n/\epsilon$, then we can just set variable $i$ to be at least $\epsilon/n$ and ignore constraint $j$. 
Thus, we can eliminate very small and very large entries from the matrix $A$, and we only incur an additional cost of $\epsilon$, but since $\opt\geq 1$, we still obtain a $(1+O(\epsilon))$-approximation. 
\end{observation}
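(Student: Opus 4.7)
The plan is to verify the observation by two independent modifications of the constraint matrix --- zeroing very small entries and trivially absorbing rows that contain very large entries --- showing that each modification costs at most an additive $\epsilon$ in the objective, and then converting the total $2\epsilon$ into the claimed multiplicative factor via the bound $\opt\geq 1$ established by the preceding lemma.

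For the small-entry step, I would set every entry $A_{ji}\leq \epsilon/(mn)$ to zero to obtain a reduced matrix $A'$, and show that $x':= x^* + (\epsilon/n)\vone$ is feasible for $A'$ with cost at most $\opt + \epsilon$. Fixing a row $j$, because $x^*\geq 0$ and $\vone^T x^* \leq \opt\leq m$, the loss in that row from zeroing is bounded by
\[
\sum_{i:\,A_{ji}\leq \epsilon/(mn)} A_{ji}\,x^*_i \;\leq\; \frac{\epsilon}{mn}\cdot \vone^T x^* \;\leq\; \frac{\epsilon}{n}.
\]
The normalization $\min_j \norm[\infty]{A_{j:}}=1$ supplies some index $i^\star$ with $A_{j i^\star}\geq 1$; this entry survives the zeroing, so the bump of $\epsilon/n$ added to $x_{i^\star}$ alone contributes at least $\epsilon/n$ to the $j$-th row of $A'x'$, cancelling the loss. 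Hence $(A'x')_j\geq (Ax^*)_j\geq 1$, and the objective has grown by exactly $n\cdot(\epsilon/n)=\epsilon$.

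For the large-entry step, I would identify every pair $(j,i)$ with $A_{ji}\geq n/\epsilon$ and, for each such variable $i$, impose the lower bound $x_i\geq \epsilon/n$ while dropping the corresponding constraint $j$. Each dropped constraint is then automatically satisfied, since $A_{ji}\cdot(\epsilon/n)\geq 1$, so any solution of the reduced LP is feasible for the original. The cost overhead is at most $n\cdot(\epsilon/n)=\epsilon$, and the two modifications are compatible because the mandatory lower bound $\epsilon/n$ coincides with the uniform bump used in the small-entry step.

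Putting the two steps together, the optimum is perturbed by at most $2\epsilon\leq 2\epsilon\cdot\opt$, yielding the claimed $(1+O(\epsilon))$-approximation; the remaining nonzero entries then lie in $[\epsilon/(mn),\,n/\epsilon)$, so each column ratio $r_i$ is at most $mn^2/\epsilon^2$, matching the bound invoked in Section~\ref{sxn:diam-reduction}. The only subtlety is the small-entry feasibility check, which relies crucially on both the row-norm normalization (to locate a coefficient of size at least $1$ in every row) and on $\opt\leq m$ (to control $x^*_i$ from above); with those two ingredients in hand, everything else is book-keeping.
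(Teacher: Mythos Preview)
Your proposal is correct and follows essentially the same approach as the paper's inline justification of the observation. One small refinement worth noting: where the paper bounds a single term by $A_{ji}x^*_i\leq(\epsilon/(mn))\cdot m=\epsilon/n$ using $x^*_i\leq\opt\leq m$, you instead bound the entire per-row loss $\sum_{i:A_{ji}\leq\epsilon/(mn)}A_{ji}x^*_i\leq(\epsilon/(mn))\cdot\vone^Tx^*\leq\epsilon/n$ via $\vone^Tx^*\leq m$, which is precisely what is needed when several small entries in the same row are zeroed simultaneously; your version is thus slightly cleaner, but the underlying idea is identical.
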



We will turn the covering LP objective into a smoothed objective function $f_\mu(x)$, as used in~\cite{AO15-parallel,AO15-stochastic}, and we are going to find a $(1+\epsilon)$-approximation of the covering LP by approximately minimizing $f_\mu(x)$ over the region 
$$
\Delta\myeq \{x\in \R^n: 0\leq x_i\leq \frac{3}{\maxAi}\}  . 
$$
The function $f_\mu(x)$ is 
\[
f_\mu(x)\myeq \vone^Tx+\max_{y\geq 0}\{y^T(\vone-Ax)+\mu H(y)\}   ,
\]
and it is a smoothed objective in the sense that it turns the covering constraints into soft penalties, with $H(y)$ being a regularization term. 
Here, we use the generalized entropy $H(y)=-\sum_j y_j\log y_j + y_j$, where $\mu$ is the smoothing parameter balancing the penalty and the regularization. 
It is straightforward to compute the optimal $y$, and write $f_\mu(x)$ explicitly, as stated in the following lemma.
\begin{lemma}
$f_\mu(x)=\vone^Tx+\mu \sum_{j=1}^m p_j(x)$, where $p_j(x)\myeq \exp(\frac{1}{\mu}(1-(Ax)_j))$.
\end{lemma}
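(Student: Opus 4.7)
The plan is to evaluate the inner maximum in the definition
\[
f_\mu(x)=\vone^Tx+\max_{y\geq 0}\bigl\{y^T(\vone-Ax)+\mu H(y)\bigr\}
\]
explicitly by exploiting the fact that the objective separates across the coordinates of $y$. Since $H(y)=\sum_j(-y_j\log y_j+y_j)$ and the penalty $y^T(\vone-Ax)$ is linear, the inner problem decomposes as
\[
\max_{y\geq 0}\sum_{j=1}^m\Bigl[y_j\bigl(1-(Ax)_j\bigr)+\mu\bigl(-y_j\log y_j+y_j\bigr)\Bigr]
=\sum_{j=1}^m\max_{y_j\geq 0}g_j(y_j),
\]
so I only need to solve $m$ one-dimensional maximizations.

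Next I would verify that each $g_j$ is strictly concave on $y_j>0$ (the entropy term contributes a $-\mu/y_j$ second derivative), so a stationary point is a global maximum, with the convention $0\log 0=0$ handling the boundary. Differentiating,
\[
g_j'(y_j)=\bigl(1-(Ax)_j\bigr)+\mu(-\log y_j-1+1)=\bigl(1-(Ax)_j\bigr)-\mu\log y_j,
\]
and setting $g_j'(y_j)=0$ yields the unique critical point $y_j^\star=\exp\!\bigl(\tfrac{1}{\mu}(1-(Ax)_j)\bigr)=p_j(x)$. Because $p_j(x)>0$, the nonnegativity constraint $y_j\geq 0$ is inactive, so $p_j(x)$ is indeed the maximizer of $g_j$ on $[0,\infty)$.

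Finally, I would substitute $y_j=p_j(x)$ back into $g_j$. Using $\mu\log p_j(x)=1-(Ax)_j$, the linear and entropy contributions cancel:
\[
g_j(p_j(x))=p_j(x)\cdot\mu\log p_j(x)-\mu p_j(x)\log p_j(x)+\mu p_j(x)=\mu p_j(x).
\]
Summing over $j$ and adding $\vone^Tx$ gives exactly $f_\mu(x)=\vone^Tx+\mu\sum_{j=1}^m p_j(x)$, as claimed. There is no real obstacle here; the only point requiring a sentence of care is the handling of the $y_j\to 0^+$ boundary, which is harmless since the unconstrained optimum is strictly positive and the entropy definition extends continuously.
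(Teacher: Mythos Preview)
Your proof is correct and follows exactly the approach the paper indicates: the paper does not give an explicit proof of this lemma, merely remarking that ``it is straightforward to compute the optimal $y$, and write $f_\mu(x)$ explicitly,'' which is precisely the coordinate-wise maximization you carry out. There is nothing to add.
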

Optimizing $f_\mu(x)$ over $\Delta$ gives a good approximation to $\opt$, in the following sense. 
If we let $x^*$ be an optimal solution satisfying~\eqref{eq:diameter}, and $u^*\myeq (1+\epsilon/2)x^*\in \Delta$, then we have the properties in the following lemma.
(This lemma is the same as Proposition $C.2$ in~\cite{AO15-stochastic}, and its proof is included for completeness in Appendix~\ref{App:proofs}.)
\begin{restatable}{lemma}{smoothing}\label{lemma:smoothing_properties}
Setting the smoothing parameter $\mu=\frac{\epsilon}{4\log(nm/\epsilon)}$, we have
\begin{enumerate}
\item $f_\mu(u^*)\leq (1+\epsilon)\opt$.
\item $f_\mu(x)\geq (1-\epsilon)\opt$ for any $x\geq 0$.
\item For any $x\geq 0$ satisfying $f_\mu(x)\leq 2\opt$, we must have $Ax\geq (1-\epsilon)\vone$.
\item If $x\geq 0$ satisfies $f_\mu(x)\leq (1+O(\epsilon))\opt$, then $\frac{1}{1-\epsilon}x$ is a $(1+O(\epsilon))$-approximation to the covering LP.
\item The gradient of $f_\mu(x)$ is
\[
\nabla f_\mu(x)=\vone - A^T\vec{p(x)} \quad \text{where} \quad p_j(x)\myeq \exp(\frac{1}{\mu}(1-(Ax)_j) ,
\]
and $\nabla_i f_\mu(x)=1-\sum_j A_{ji}p_j(x)\in [-\infty,1]$.
\end{enumerate}
\end{restatable}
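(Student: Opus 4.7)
The five parts of the lemma split naturally: parts (1), (2), and (3) control how well $f_\mu$ approximates the true covering objective near the optimum, part (4) is an immediate corollary of (3), and part (5) is direct differentiation. The unifying theme is to exploit the choice $\mu = \epsilon/(4\log(nm/\epsilon))$, which makes $\exp(\epsilon/\mu) = (nm/\epsilon)^4$, so that a single violated constraint forces $f_\mu$ to blow up well beyond any quantity of order $\opt \leq m$, while simultaneously each penalty term shrinks polynomially fast once a constraint is satisfied with slack $\epsilon/2$.

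For part (1), I would start from the fact that $x^*$ is feasible, so $Au^* = (1+\epsilon/2)Ax^* \geq (1+\epsilon/2)\vone$, hence $1 - (Au^*)_j \leq -\epsilon/2$ for every $j$. This gives $p_j(u^*) \leq \exp(-\epsilon/(2\mu)) = (\epsilon/(nm))^2$, so the total penalty contribution is bounded by $\mu m (\epsilon/(nm))^2$, which is much smaller than $\epsilon\opt/2$ once we invoke $\opt \geq 1$ from the preceding lemma. Adding $\vone^T u^* = (1+\epsilon/2)\opt$ then gives $f_\mu(u^*) \leq (1+\epsilon)\opt$. For part (2), I would split on whether $Ax \geq (1-\epsilon)\vone$. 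If yes, then $x/(1-\epsilon)$ is feasible for the covering LP, so $\vone^T x \geq (1-\epsilon)\opt$ and the penalty term is nonnegative. If no, some constraint $j$ has $(Ax)_j < 1-\epsilon$, so $\mu p_j(x) > \mu \exp(\epsilon/\mu) = \mu (nm/\epsilon)^4 \gg m \geq \opt$, and the conclusion holds trivially.

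Part (3) is the contrapositive of the ``bad'' case in part (2): if $(Ax)_j < 1-\epsilon$ for some $j$, the same calculation $\mu p_j(x) > \mu(nm/\epsilon)^4$ exceeds $2m \geq 2\opt$, contradicting $f_\mu(x) \leq 2\opt$. Part (4) then follows by applying part (3) (valid since $(1+O(\epsilon))\opt \leq 2\opt$ for $\epsilon$ small enough), which shows that $x/(1-\epsilon)$ satisfies $Ax/(1-\epsilon) \geq \vone$, while $\vone^T x/(1-\epsilon) \leq f_\mu(x)/(1-\epsilon) \leq (1+O(\epsilon))\opt$. Part (5) is a direct computation: writing $p_j(x) = \exp(\mu^{-1}(1-(Ax)_j))$, differentiation gives $\partial p_j / \partial x_i = -\mu^{-1} A_{ji} p_j(x)$, and summing the contributions in $f_\mu(x) = \vone^T x + \mu \sum_j p_j(x)$ cancels the $\mu^{-1}$ factor to yield $\nabla f_\mu(x) = \vone - A^T \vec{p(x)}$; the bound $\nabla_i f_\mu(x) \in [-\infty,1]$ is immediate since $p_j(x) \geq 0$ and $A_{ji} \geq 0$.

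There is no real conceptual obstacle; the entire proof is a careful matching of the smoothing parameter $\mu$ against $\opt \in [1,m]$, and the ``hard'' part is simply verifying that the numerical constants in the choice $\mu = \epsilon/(4\log(nm/\epsilon))$ are sufficient to make both the ``penalty is tiny'' estimate in part (1) and the ``penalty is enormous'' estimate in parts (2)--(3) go through simultaneously. Since the proof is a direct transposition of Proposition~C.2 of~\cite{AO15-stochastic} to the diameter-reduced covering LP, I expect no new ideas beyond the diameter bound~\eqref{eq:diameter} already established in Section~\ref{sxn:diam-reduction}, which ensures $u^* = (1+\epsilon/2)x^* \in \Delta$.
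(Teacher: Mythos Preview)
Your proposal is correct and follows essentially the same approach as the paper's proof: part~(1) via the slack $(Au^*)_j - 1 \geq \epsilon/2$ giving $p_j(u^*)\leq (\epsilon/(nm))^2$, parts~(2)--(3) via the blow-up $\mu p_j(x) \geq \mu(nm/\epsilon)^4$ when a constraint is violated by $\epsilon$, part~(4) as a corollary of~(3), and part~(5) by direct differentiation. The only cosmetic difference is that the paper phrases part~(2) as a proof by contradiction rather than your explicit case split on whether $Ax\geq (1-\epsilon)\vone$, but the underlying estimates are identical.
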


Although $f_\mu(x)$ gives a good approximation to the covering LP, we cannot simply apply the standard (accelerated) gradient descent algorithm to optimize it, as $f_\mu(x)$ doesn't have the necessary Lipschitz-smoothness property. 
However, $f_\mu(x)$ is \emph{locally Lipschitz continuous}, in a sense quantified by the following lemma, and so we have a good improvement with a gradient step within certain range. 
(The following is a ``symmetric'' version%
\footnote{The smoothed objective function for packing LP is $-\vone^Ty+\mu \sum_{j=1}^m q_j(y)$, where $q_j(y)\myeq \exp(\frac{1}{\mu}((Ay)_j-1))$, which is symmetric to $f_\mu(x)$. The properties of $f_\mu(x)$ inherit the symmetry to its packing counterpart, and it can be derived with the same way as~\cite{AO15-stochastic} used for the packing function, but we include it's proof to highlight differences.} 
of Lemma $2.6$ in~\cite{AO15-stochastic}.)

\begin{lemma}\label{lemma:Lipschitz}
Let $L\myeq \frac{4}{\mu}$, for any $x\in \Delta$, and $i\in [n]$
\begin{enumerate}
\item If $\nabla_i f_\mu(x)\in (-1,1)$, then for all $|\gamma|\leq \frac{1}{L\maxAi}$, we have
\[
|\nabla_i f_\mu(x)-\nabla_i f_\mu(x+\gamma \e_i)|\leq L\maxAi |\gamma|  .
\]
\item If $\nabla_i f_\mu(x)\leq -1$, then for all $\gamma \leq \frac{1}{L\maxAi}$, we have
\[
\nabla_i f_\mu(x+\gamma \e_i) \leq (1-\frac{L\maxAi}{2}|\gamma|)\nabla_if_\mu(x)  .
\]
\end{enumerate}
\end{lemma}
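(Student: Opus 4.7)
The plan is to exploit the explicit formula $\nabla_i f_\mu(x) = 1 - \sum_j A_{ji}\,p_j(x)$ from Lemma~4.2 together with the multiplicative update $p_j(x + \gamma\e_i) = p_j(x)\exp(-\gamma A_{ji}/\mu)$, which follows directly from the definition of $p_j$. Under $|\gamma| \leq 1/(L\maxAi)$ with $L = 4/\mu$, every exponent obeys $|\gamma A_{ji}/\mu| \leq |\gamma|\maxAi/\mu \leq 1/4$ since $A_{ji} \leq \maxAi$, so I can uniformly expand the exponentials using the elementary estimates $|e^{-z}-1|\leq 2|z|$ and $e^{-z}\geq 1-z$ on this range. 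Another recurring trick will be to bound $\sum_j A_{ji}^2 p_j(x) \leq \maxAi\sum_j A_{ji}p_j(x) = \maxAi\bigl(1 - \nabla_i f_\mu(x)\bigr)$, which turns coordinate-wise second moments of $p$ into the gradient itself.

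For part~(1), I would write $\nabla_i f_\mu(x) - \nabla_i f_\mu(x + \gamma\e_i) = \sum_j A_{ji}p_j(x)\bigl(e^{-\gamma A_{ji}/\mu}-1\bigr)$, apply $|e^{-z}-1|\leq 2|z|$ termwise to obtain an upper bound of $(2|\gamma|/\mu)\sum_j A_{ji}^2 p_j(x)$, and then invoke the moment bound above. The hypothesis $\nabla_i f_\mu(x) \in (-1,1)$ caps the resulting factor $1 - \nabla_i f_\mu(x)$ by $2$, producing the required $L\maxAi|\gamma| = (4/\mu)\maxAi|\gamma|$.

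For part~(2), set $G \mathrel{:=} -\nabla_i f_\mu(x) \geq 1$, so that $\sum_j A_{ji}p_j(x) = 1 + G$. The target inequality rearranges to $\sum_j A_{ji}p_j(x)\,e^{-\gamma A_{ji}/\mu} \geq 1 + (1 - L\maxAi\gamma/2)G$. Applying $e^{-z}\geq 1-z$ termwise (note $1 - \gamma A_{ji}/\mu \geq 3/4 > 0$ throughout, so all signs are preserved) gives a lower bound of $(1+G) - (\gamma/\mu)\sum_j A_{ji}^2 p_j(x) \geq (1+G)\bigl(1 - \gamma\maxAi/\mu\bigr)$. Comparing with the target and plugging in $L = 4/\mu$ reduces the claim to $(G-1)\,\gamma\maxAi/\mu \geq 0$, which is immediate from $G \geq 1$ and $\gamma \geq 0$. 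The case $\gamma < 0$ (if one reads the statement as unrestricted in sign) is trivial: decreasing $x_i$ only makes the gradient more negative, while the right-hand-side multiplier lies in $[1/2,1]$.

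The main obstacle is conceptual rather than computational: the covering smoothed objective is sign-flipped relative to the packing one of~\cite{AO15-stochastic}, so the large-gradient regime is $\nabla_i f_\mu \leq -1$ (not $\geq 1$) and the useful direction is $\gamma > 0$. Ensuring that each elementary exponential inequality is applied with the right sign in this mirrored setting, and that inequalities flow the right way through the manipulation (in particular that $(1 - L\maxAi\gamma/2)\nabla_i f_\mu(x)$ is the \emph{larger} quantity when $\nabla_i f_\mu(x)<0$), is the only place where care is needed; the remainder is short algebraic bookkeeping.
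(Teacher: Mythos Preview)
Your argument is correct. It differs in style from the paper's proof, though the two share the same core inequality $\sum_j A_{ji}^2 p_j(x)\le \norm[\infty]{A_{:i}}\bigl(1-\nabla_i f_\mu(x)\bigr)$. The paper does not expand $p_j(x+\gamma\e_i)=p_j(x)e^{-\gamma A_{ji}/\mu}$ termwise; instead it bounds the logarithmic derivative
\[
\Bigl|\log\frac{1-\nabla_i f_\mu(x+\gamma\e_i)}{1-\nabla_i f_\mu(x)}\Bigr|
=\Bigl|\frac{1}{\mu}\int_0^\gamma \frac{\sum_j A_{ji}^2 p_j(x+\nu\e_i)}{\sum_j A_{ji} p_j(x+\nu\e_i)}\,d\nu\Bigr|
\le \frac{L\norm[\infty]{A_{:i}}}{4}|\gamma|,
\]
obtaining a multiplicative control on $1-\nabla_i f_\mu$, and only then linearizes via $x\le e^x-1\le 1.2x$ on $[-\tfrac14,\tfrac14]$. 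Your route is more elementary---no integration, just the closed-form update of $p_j$ plus the inequalities $|e^{-z}-1|\le 2|z|$ and $e^{-z}\ge 1-z$ on the relevant range---and yields the stated constants with slightly less slack. The paper's route has the advantage that the intermediate multiplicative bound on $1-\nabla_i f_\mu$ is a clean self-contained statement, which can be convenient if one later needs sharper or two-sided control. One small remark: your handling of $\gamma<0$ in part~(2) tacitly assumes $|\gamma|\le 1/(L\norm[\infty]{A_{:i}})$ when you assert the multiplier lies in $[1/2,1]$; the paper's proof makes the same implicit assumption (its linearization step also needs $|\gamma|$ small), and this is all the algorithm ever uses, so it is not a defect.
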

\begin{proof}
First, observe the following:
\begin{align*}
\abs{\log \frac{1-\nabla_i f_\mu(x+\gamma\e_i)}{1-\nabla_i f_\mu(x)}}
&= \abs{\int_{0}^{\gamma} - \frac{\nabla_{ii}f_\mu(x+\nu \e_i)}{1-\nabla_if_\mu(x+\nu\e_i)} d\nu}
=\abs{\frac{1}{\mu}\int_{0}^{\gamma}\frac{\sum_j A_{ji}^2p_j(x+\nu\e_i)}{\sum_j A_{ji}p_j(x+\nu\e_i)}d\nu}\\
&\leq \abs{\frac{1}{\mu}\int_{0}^{\gamma} \maxAi d\nu}=\frac{1}{\mu}|\gamma|\maxAi =\frac{L\maxAi}{4}|\gamma|   .
\end{align*}
Then, we have
\[
\exp(-\frac{L\maxAi}{4}|\gamma|)\leq\frac{1-\nabla_i f_\mu(x+\gamma\e_i)}{1-\nabla_i f_\mu(x)}  \leq \exp(\frac{L\maxAi}{4}|\gamma|)   .
\]
Since $\frac{L\maxAi}{4}|\gamma|\leq \frac{1}{4}$ by our assumption, we have $x\leq e^x-1\leq 1.2x$ for $x\in [-\frac{1}{4},\frac{1}{4}]$. 
Thus, it follows that
\[
-\frac{L\maxAi}{4}|\gamma|\leq\frac{\nabla_i f_\mu(x)-\nabla_i f_\mu(x+\gamma\e_i)}{1-\nabla_i f_\mu(x)}  \leq 1.2\frac{L\maxAi}{4}|\gamma|   .
\]
Finally, to prove the lemma we consider the following two cases:
\begin{enumerate}
\item If $\nabla_i f_\mu(x)\in (-1,1)$, then we have
\[
|\nabla_i f_\mu(x)-\nabla_i f_\mu(x+\gamma \e_i)|\leq 1.2(1-\nabla_if_\mu(x))\frac{L\maxAi}{4}|\gamma|\leq L\maxAi |\gamma|   .
\]
\item If $\nabla_i f_\mu(x)\leq -1$, then $1-\nabla_i f_\mu(x)\leq -2\nabla_if_\mu(x)$, and
\begin{align*}
\nabla_i f_\mu(x+\gamma \e_i) &\leq \nabla_i f_\mu(x)+(1-\nabla_if_\mu(x))\frac{L\maxAi}{4}|\gamma| \leq (1-\frac{L\maxAi}{2}|\gamma|)\nabla_if_\mu(x)   .
\end{align*}
\end{enumerate}
\end{proof}

We call $L\maxAi$ the \emph{coordinate-wise local Lipschitz constant}. 
For readers familiar with accelerated coordinate descent method (ACDM)~\cite{Nesterov12}, the $A$-norm is essentially the $\norm[1-\alpha]{\cdot}$ in ACDM~\cite{Nesterov12} with $\alpha=0$, except we use the coordinate-wise local Lipschitz constant instead of the Lipschitz constant to weight each coordinate.
The significance of Lemma~\ref{lemma:Lipschitz} is that for covering LPs the coordinate-wise diameter is  inversely proportional to the coordinate-wise local Lipschitz constant.  
(This fact has been established previously for the case of packing LPs~\cite{AO15-stochastic}.)

\subsection{An Accelerated Coordinate Descent Algorithm}
\label{sxn:solver-accelerated}

\begin{algorithm}
\caption{Accelerated stochastic coordinate descent for both packing and covering}\label{alg:ascd}
\textbf{Input}: $A\in \R^{m\times n}_{\geq 0},x^{\start}\in \Delta,f_\mu,\epsilon$
\textbf{Output}: $y_T\in \Delta$
\begin{algorithmic}[1]
\State $\mu\leftarrow\frac{\epsilon}{4\log(nm/\epsilon)}, L\leftarrow\frac{4}{\mu},\tau\leftarrow\frac{1}{8nL}$
\State $T\leftarrow\ceil*{8nL\log(1/\epsilon)}=\tilde{O}(\frac{n}{\epsilon})$
\State $x_0,y_0,z_0\leftarrow x^{\start},\alpha_0\leftarrow\frac{1}{nL}$
\For{$k=1$ to $T$}
\State $\alpha_k\leftarrow\frac{1}{1-\tau}\alpha_{k-1}$
\State $x_k\leftarrow\tau z_{k-1}+(1-\tau)y_{k-1}$
\State Select $i\in [n]$ uniformly at random.
\LineComment{Gradient truncation:}
\State  Let $\xi^{(i)}_k\leftarrow \left\{\begin{array}{lr}
-1 & \nabla_if_\mu(x_k)<-1\\
\nabla_if_\mu(x_k) & \nabla_if_\mu(x_k)\in [-1,1]\\
1 & \nabla_if_\mu(x_k)>1 \end{array} \right.$
\LineComment{Mirror descent step:}
\State $z_k\leftarrow z^{(i)}_k\myeq \argmin_{z\in \Delta}\{V_{z_{k-1}}(z)+\dotp{z,n\alpha_k\xi^{(i)}_k}\}$.
\LineComment{Gradient descent step:}
\State $y_k\leftarrow y^{(i)}_k\myeq x_k+\frac{1}{n\alpha_k L}(z^{(i)}_k-z_{k-1})$
\EndFor
\State \Return $y_T$.
\end{algorithmic}
\end{algorithm}

We will now show that the accelerated coordinate descent used in packing LP solver in~\cite{AO15-stochastic} also works as a covering LP solver, with appropriately-chosen starting points and smoothed objective functions. 
Consider Algorithm~\ref{alg:ascd}, which is our main accelerated stochastic coordinate descent for both packing and covering.
This algorithm takes as input a matrix $A\in \R^{m\times n}_{\geq 0}$, an initial condition $x^{\start}\in \Delta$, a smoothed function $f_\mu$, and an error parameter $\epsilon$, and it returns as output a vector $y_T\in \Delta$.
The correctness of this algorithm and its running time guarantees for the packing problem have already been nicely presented in~\cite{AO15-stochastic}, and so here we will focus on the covering problem.

Our main result is summarized in the following theorems.

\begin{theorem}\label{thm:main}
With $x^{\start}$ computable in time $\tilde{O}(N)$ to be specified later, Algorithm~\ref{alg:ascd} outputs $y_T$ satisfying $\E[f_\mu(y_T)]\leq (1+6\epsilon)\opt$, and the running time is $\tilde{O}(N/\epsilon)$.
\end{theorem}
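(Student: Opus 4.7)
The plan is to follow the linear coupling framework of Allen-Zhu and Orecchia, adapted to covering using the diameter-reduced LP from Section~\ref{sxn:diam-reduction}. The target per-iteration inequality, in expectation over the random coordinate $i\in[n]$, is a ``loss--regret'' bound of the form
\[
\alpha_k \dotp*{\nabla f_\mu(x_k),\, z_{k-1}-u^*} \leq \bigl(\text{gradient-descent progress at step }k\bigr) + \bigl(V_{z_{k-1}}(u^*) - \E[V_{z_k}(u^*)]\bigr),
\]
where $u^* = (1+\epsilon/2)x^* \in \Delta$ with $x^*$ satisfying~\eqref{eq:diameter}, and $V$ is the Bregman divergence of the weighted generalized entropy whose weights are $\maxAi^2$, so the associated strongly convex norm is $\norm[A]{\cdot}$. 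The overall potential to telescope is $\alpha_k(f_\mu(y_k) - f_\mu(u^*)) + V_{z_k}(u^*)$, and the coupling coefficient $\tau = 1/(8nL)$ in Algorithm~\ref{alg:ascd} is exactly what makes the $x_k = \tau z_{k-1} + (1-\tau)y_{k-1}$ convex combination reconcile the mirror and gradient trajectories.

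First I would analyze the mirror descent step: the closed-form minimizer defining $z_k^{(i)}$ yields the standard three-point identity
\[
n\alpha_k\dotp*{\xi_k^{(i)},\,z_{k-1}-u^*} \leq V_{z_{k-1}}(u^*) - V_{z_k^{(i)}}(u^*) + n^2\alpha_k^2 \maxAi^2 (z_k^{(i)}-z_{k-1})_i^2,
\]
the last term being the regret. Second I would analyze the gradient descent step using Lemma~\ref{lemma:Lipschitz}: when $\nabla_i f_\mu(x_k)\in[-1,1]$, the step $y_k = x_k + \frac{1}{n\alpha_k L}(z_k^{(i)}-z_{k-1})$ has length bounded (in expectation) by $1/(L\maxAi)$ thanks to the diameter bound~\eqref{eq:diameter} and the definition of $\Delta$, so Part~1 applies and yields a quadratic progress lower bound $f_\mu(x_k)-f_\mu(y_k) \gtrsim L\maxAi^2(z_k^{(i)}-z_{k-1})_i^2$, which exactly pays for the mirror-descent regret once $n^2\alpha_k^2 L \leq n$, i.e.\ $\tau \geq n\alpha_k L$, which holds throughout. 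This is where the diameter reduction is indispensable: without the bound $u^*_i\leq 2/\maxAi$, the mirror step could overshoot and exit the local Lipschitz window, breaking the quadratic progress estimate. When $\nabla_i f_\mu(x_k)<-1$, the truncation $\xi_k^{(i)}=-1$ introduces a ``loss'' term $(1+\nabla_i f_\mu(x_k))$ that cannot be paid for by a quadratic bound, but Part~2 of Lemma~\ref{lemma:Lipschitz} gives a multiplicative shrinkage of the gradient along the coordinate move, which covers the loss in the spirit of width-independence from~\cite{AO15-parallel}.

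Combining the two ingredients and using $\alpha_k = (1-\tau)^{-1}\alpha_{k-1}$ so that $\alpha_k - \alpha_{k-1} = \tau \alpha_k$, I would obtain a telescoping inequality whose summation yields
\[
\alpha_T\,\E[f_\mu(y_T)-f_\mu(u^*)] \leq \alpha_0\bigl(f_\mu(x^{\start})-f_\mu(u^*)\bigr) + V_{z_0}(u^*).
\]
Since $\alpha_T/\alpha_0 = (1-\tau)^{-T}$ grows geometrically in $T$, picking $T=\lceil 8nL\log(1/\epsilon)\rceil = \tilde{O}(n/\epsilon)$ makes the right-hand side, after dividing by $\alpha_T$, at most $\epsilon\,\opt$, provided the starting point $x^{\start}$ from Section~\ref{sxn:solver-starting} ensures both $f_\mu(x^{\start})-f_\mu(u^*) = O(\opt)$ and $V_{z_0}(u^*) = \tilde{O}(\opt)$. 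Applying Lemma~\ref{lemma:smoothing_properties}(1) yields $f_\mu(u^*)\leq(1+\epsilon)\opt$, and absorbing constants gives $\E[f_\mu(y_T)]\leq(1+6\epsilon)\opt$. For the running time, each iteration updates only coordinate $i$ and refreshes those $p_j$ with $A_{ji}\neq 0$, so with $(Ax_k)_j$ maintained implicitly the amortized per-iteration cost is $O(N/n)$; combined with $T=\tilde{O}(n/\epsilon)$ this yields $\tilde{O}(N/\epsilon)$, with the diameter reduction inflating $N$ by only an $O(\log(mn/\epsilon))$ factor per Observation~\ref{ob:smallratio}. The main technical obstacle is the simultaneous handling of the two gradient regimes inside one expectation over the random coordinate: the truncation changes which term pays for which, and one must verify that cross-terms vanish or are nonnegative so the telescoping still closes; this is precisely where the symmetry between covering and packing is restored by the coordinate-wise diameter~\eqref{eq:diameter}.
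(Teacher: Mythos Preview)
Your overall architecture matches the paper's: mirror descent on the truncated gradient, gradient descent to pay for both the truncation loss and the mirror-descent regret, telescoping with geometrically growing $\alpha_k$, and a warm start controlling $f_\mu(x^{\start})$ and $V_{z_0}(u^*)$. However, several details are off in ways that would prevent the argument from closing.

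First, the proximal setup is not a weighted generalized entropy. The paper uses the \emph{Euclidean} distance-generating function $w(x)=\tfrac12\norm[A]{x}^2$ with coordinate weights $\maxAi$ (not $\maxAi^2$), so $V_x(y)=\tfrac12\norm[A]{x-y}^2$. Generalized entropy appears only in the smoothing that defines $f_\mu$, not in the mirror map. A weighted entropy is strongly convex in a weighted $\ell_1$-norm, not in $\norm[A]{\cdot}$, so the pairing you state is internally inconsistent and would change both the explicit form of the mirror step and the regret term.

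Second, your telescoped inequality
\[
\alpha_T\,\E[f_\mu(y_T)-f_\mu(u^*)] \leq \alpha_0\bigl(f_\mu(x^{\start})-f_\mu(u^*)\bigr) + V_{z_0}(u^*)
\]
is missing a factor of $1/\tau=8nL$ on the $\alpha$-terms. With the algorithm's $\alpha_0=1/(nL)$, dividing your inequality by $\alpha_T\approx 1/(\epsilon nL)$ leaves a $V_{z_0}(u^*)$ contribution of order $\epsilon nL\cdot\opt$, not $\epsilon\,\opt$. The paper actually telescopes $8nL\alpha_k f_\mu(y_k)$ and obtains
\[
\E[f_\mu(y_T)]\leq f_\mu(u^*)+\epsilon f_\mu(y_0)+\tfrac{\epsilon}{8}V_{z_0}(u^*);
\]
the factor $8nL$ is precisely what turns $1/(8nL\alpha_T)$ into $\epsilon/8$. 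Relatedly, your condition ``$\tau\ge n\alpha_k L$'' is never satisfied (already at $k=0$ it would force $8nL\le 1$); the inequality the paper actually uses is $n\alpha_k<1$, which does hold for all $k\le T$.

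Third, the coupling case analysis in the paper is three-way, not two-way: beyond splitting on whether $\eta_k^{(i)}=0$ or $\eta_k^{(i)}<0$, one must further distinguish whether $z^{(i)}_{k,i}$ lands strictly inside $\Delta$ or is projected to the boundary $3/\maxAi$. The diameter bound $u^*_i\le 3/\maxAi$ enters exactly in these two sub-cases to convert $\dotp{\eta_k^{(i)},z_{k-1}-u^*}$ into a multiple of $\dotp{\nabla f_\mu(x_k),x_k-y_k^{(i)}}$. Both gradient regimes are then handled uniformly by Lemma~\ref{lemma:gradient}, which gives $f_\mu(x_k)-f_\mu(y_k^{(i)})\ge\tfrac12\dotp{\nabla f_\mu(x_k),x_k-y_k^{(i)}}$; Part~2 of Lemma~\ref{lemma:Lipschitz} is used inside that lemma rather than as a separate ``multiplicative shrinkage covers the loss'' step.
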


\noindent
Given Theorem~\ref{thm:main}, a standard application of Markov bound, together with part $5$ of Lemma~\ref{lemma:smoothing_properties}, gives the following theorem as a corollary.

\begin{theorem}
\label{thm:main2}
There is a algorithm that, with probability at least $9/10$, computes a $(1+O(\epsilon))$-approximation to the fractional covering problem and has $\tilde{O}(N/\epsilon)$ expected running time.
\end{theorem}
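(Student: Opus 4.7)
The plan is to derive the theorem directly from Theorem~\ref{thm:main} by combining a Markov-inequality argument with the approximation properties of the smoothed objective collected in Lemma~\ref{lemma:smoothing_properties}. By Theorem~\ref{thm:main}, Algorithm~\ref{alg:ascd} (with the designated $\tilde{O}(N)$-computable starting point $x^{\start}$) produces in time $\tilde{O}(N/\epsilon)$ a random output $y_T \in \Delta$ satisfying $\E[f_\mu(y_T)] \leq (1+6\epsilon)\opt$.

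To turn this in-expectation bound into a high-probability bound, I would introduce the random variable $Z := f_\mu(y_T) - (1-\epsilon)\opt$. This $Z$ is nonnegative deterministically, since part $2$ of Lemma~\ref{lemma:smoothing_properties} guarantees $f_\mu(x) \geq (1-\epsilon)\opt$ for every $x \geq 0$, and $y_T \in \Delta$ ensures $y_T \geq 0$. Hence $\E[Z] \leq (1+6\epsilon)\opt - (1-\epsilon)\opt = 7\epsilon \opt$, and Markov's inequality gives $\Pr[Z \geq 70\epsilon \opt] \leq 1/10$. Equivalently, with probability at least $9/10$, $f_\mu(y_T) \leq (1+69\epsilon)\opt = (1+O(\epsilon))\opt$.

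Conditioned on that event, part $4$ of Lemma~\ref{lemma:smoothing_properties} implies that $\tfrac{1}{1-\epsilon} y_T$ is a $(1+O(\epsilon))$-approximation to the covering LP~\eqref{eq:covLP}. Rescaling $\epsilon$ down by a suitable constant factor at the outset then delivers the stated $(1+O(\epsilon))$-approximation guarantee, while preserving the $\tilde{O}(N/\epsilon)$ running time since the $\log(1/\epsilon)$ factors absorbed in $\tilde{O}$ only change by a constant under such a rescaling.

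The only delicate point is that Markov's inequality requires a nonnegative random variable, which is why I shift by $(1-\epsilon)\opt$ rather than applying Markov directly to $f_\mu(y_T)$; the uniform lower bound from part $2$ of Lemma~\ref{lemma:smoothing_properties} is exactly what legitimizes this shift. Beyond this, the statement is essentially a bookkeeping consequence of Theorem~\ref{thm:main} and the rounding guarantee in Lemma~\ref{lemma:smoothing_properties}, so there is no substantive obstacle.
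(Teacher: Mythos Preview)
Your proposal is correct and follows essentially the same approach the paper indicates: the paper states that Theorem~\ref{thm:main2} follows from Theorem~\ref{thm:main} by ``a standard application of Markov bound, together with part $5$ of Lemma~\ref{lemma:smoothing_properties}.'' Your write-up supplies exactly those details, with the sensible refinement of shifting by $(1-\epsilon)\opt$ (using part~$2$) so that Markov applies to a nonnegative random variable. Note that the paper's reference to ``part~$5$'' appears to be a typo for part~$4$, which is the rounding guarantee you correctly invoke; part~$5$ merely records the gradient formula.
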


\noindent
Not surprisingly, due to the structural similarities of packing and covering problems after diameter reduction, the correctness of Algorithm~\ref{alg:ascd} for covering can be established using the same approach as~\cite{AO15-stochastic} did for packing.
The modifications are fairly straightforward, and we will point out the similarities whenever possible.

Before proceeding with our proof of these theorems, we discuss briefly the optimization scheme from~\cite{AO15-stochastic} we will use. 
First, observe that the $A$-norm, where 
\begin{equation}\label{eq:anorm}
\norm[A]{x}=\sqrt{\sum_{i}\maxAi x_i^2}   ,
\end{equation}
is used as the proximal setup for mirror descent.
The corresponding distance generating function is $w(x)=\frac{1}{2}\norm[A]{x}^2$, and the Bregman divergence is $V_x(y)=\frac{1}{2}\norm[A]{x-y}^2$.\footnote{In particular, $w$ is a $1$-strongly convex function with respect to $\norm[A]{\cdot}$, and $V_x(y)\myeq w(y)-\dotp{\nabla w(x),y-x}-w(x)$. See~\cite{AO14} for a detailed discussion of mirror descent as well as and several interpretations.}

Next, observe that Algorithm~\ref{alg:ascd} works as follows. 
Each iteration integrates a mirror descent step and a gradient descent step. 
The standard analysis of mirror descent gives a convergence of $\frac{1}{\epsilon^2}$, and it depends on the width of the problem. 
Thus, to get a width-independent $\tilde{O}(\frac{N}{\epsilon})$ solver, we need to show that Algorithm~\ref{alg:ascd} addresses both of these issues. 
\begin{itemize}
\item
In order to eliminate the width from the convergence rate, the gradient $\nabla_if_\mu(x_k)$ is split into the small component, 
$\xi^{(i)}_k=\max\{-1,\nabla_if_\mu(x_k)\}\e_i  ,$
and the large component, 
$\eta^{(i)}_k=\nabla_i f_\mu(x_k)\e_i-\xi^{(i)}_k  .$
Only the small component $\xi^{(i)}$ is given to the mirror descent step, and thus the width is effectively $1$. 
However, the truncation incurs loss from the large component, as the mirror descent only acts on the small component. 
Following~\cite{AO15-parallel}, the improvement from the gradient descent step is used to cover that loss.
\item
In order to improve the $1/\epsilon^2$ rate, recall that the $1/\epsilon^2$ in the convergence of mirror descent is largely due to the regret term accumulated along all iterations of mirror descent. 
In order to get to $1/\epsilon$, the improvement from the gradient step also need to cover the regret from the mirror descent step (see Eqn.~\eqref{lossregret} below for the precise formulation of this loss and regret). 
This enables us to telescope both the loss and the regret through all iterations and to bound the total by the gap between $f_\mu(x^{\start})$ and the optimal. 
The remaining terms in the mirror descent also telescope through the algorithm, and they are bounded in total by the distance (in $A$-norm) from $x^{\start}$ to $u^*\in \Delta$. 
\end{itemize}
Then, given these, all we need is an initial condition $x^{\start}$ that is not too far away from the optimal in terms of the function value and not too far away from $u^*$ in $A$-norm. 
For packing, starting with all $0$'s will work.
For covering, we will show later a good enough $x^{\start}$ can be obtained in $\tilde{O}(N)$. 

Finally, here are some lemmas about the algorithm. 
The following two lemmas are invariant to the differences between packing and covering problems, and so they follow directly from the same results in~\cite{AO15-stochastic} (but, for completeness, we include the proofs in Appendix~\ref{App:proofs}). 
The values of parameters $\mu,L,\tau,\alpha_k$ can be found in the description of Algorithm~\ref{alg:ascd}.
The first lemma says that the gradient step we take is always valid (i.e., in $\Delta$), which is crucial in the sense that the gradient descent improvement is proportional to the step length, and we need the step length to be at least $\frac{1}{n\alpha_kL}$ of the mirror descent step length for the coupling to work. 
\begin{restatable}{lemma}{validstep}
We have $x_k,y_k,z_k\in \Delta$ for all $k=0,1,\ldots, T$.
\end{restatable}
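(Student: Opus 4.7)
The plan is a joint induction on $k$ showing $x_k, y_k, z_k \in \Delta$. Two of the three containments come essentially for free: $z_k \in \Delta$ holds by definition since $z_k$ is the argmin of a convex problem constrained to $\Delta$, and $x_k = \tau z_{k-1} + (1-\tau)y_{k-1}$ lies in $\Delta$ by convexity of $\Delta$ together with the inductive hypothesis. The base case is trivial because $x_0 = y_0 = z_0 = x^{\start} \in \Delta$ by assumption on the input.

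The real work is showing $y_k \in \Delta$. A naive attempt to write $y_k$ as a convex combination of $y_{k-1}$, $z_{k-1}$, and $z_k$ fails because $\frac{1}{n\alpha_k L}$ exceeds $\tau$ for all moderate $k$, making the coefficient of $z_{k-1}$ come out negative. Instead, I would unfold the $y$-recursion all the way back to $z_0$ and prove by induction the explicit identity
\[
y_k \;=\; \tau \sum_{l=0}^{k-1} (1-\tau)^{k-1-l}\, z_l \;+\; (1-\tau)^k\, z_k .
\]
Once this identity is established, non-negativity of the coefficients (since $\tau \in (0,1)$) together with the geometric sum $\tau\sum_{l=0}^{k-1}(1-\tau)^{k-1-l} + (1-\tau)^k = 1$ exhibit $y_k$ as a genuine convex combination of $z_0, \ldots, z_k$. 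Each $z_l$ lies in $\Delta$ by the joint induction, so convexity of $\Delta$ gives $y_k \in \Delta$ immediately.

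The algebraic fulcrum is the identity $\frac{1}{n\alpha_k L} = (1-\tau)^k$, which is immediate from $\alpha_0 = \frac{1}{nL}$ and the recursion $\alpha_k = \alpha_{k-1}/(1-\tau)$. Plugging it into the gradient-step update $y_k = x_k + \frac{1}{n\alpha_k L}(z_k - z_{k-1})$, then substituting $x_k = \tau z_{k-1} + (1-\tau)y_{k-1}$ and invoking the inductive formula for $y_{k-1}$, the coefficients of $z_{k-1}$ telescope and the displayed identity drops out with no further estimates or case analysis. The only real obstacle is one of perspective: recognizing that $y_k$ should be expanded as a convex combination of \emph{all} past $z_l$'s rather than of a few adjacent iterates. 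The particular step-size schedule hard-coded into Algorithm~\ref{alg:ascd} is precisely what makes the telescoping work, so once the identity is guessed the induction closes mechanically.
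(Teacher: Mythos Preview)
Your proposal is correct and follows essentially the same approach as the paper: both argue that $z_k\in\Delta$ by projection, $x_k\in\Delta$ by convexity, and then show $y_k\in\Delta$ by exhibiting $y_k$ as a convex combination of $z_0,\ldots,z_k$. Your presentation is in fact a bit cleaner than the paper's---you exploit the identity $\frac{1}{n\alpha_kL}=(1-\tau)^k$ to write the closed form $y_k=\tau\sum_{l=0}^{k-1}(1-\tau)^{k-1-l}z_l+(1-\tau)^k z_k$, whereas the paper verifies an equivalent but less transparent recursive description of the same coefficients.
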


\noindent
The second lemma is clearly crucial to achieve the nearly linear time $\tilde{O}(N/\epsilon)$ algorithm.
\begin{restatable}{lemma}{periteration}\label{lemma:periteration}
Each iteration can be implemented in expected $O(N/n)$ time.
\end{restatable}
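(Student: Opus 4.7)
The plan is to show that by maintaining the right implicit representations of the iterates, every quantity the algorithm actually reads or writes touches at most $O(\text{nnz}(A_{:i}))$ positions for the single sampled index $i$, and then take expectations over the uniform choice of $i$ to get $O(N/n)$. The central difficulty is that, while the mirror-step update changes $z_{k-1}$ in only one coordinate, the gradient-step update $y_k = x_k + \frac{1}{n\alpha_k L}(z_k - z_{k-1})$ combined with $x_k = \tau z_{k-1} + (1-\tau)y_{k-1}$ forces $y_k$ to differ from $y_{k-1}$ in every coordinate in general; writing $y_k$ down explicitly would cost $\Omega(n)$ per iteration, which is too expensive. So the representation has to hide this dense combination inside constants.

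First I would maintain, throughout the algorithm, the three auxiliary vectors $z_k$, $\widehat{y}_k$, and the length-$m$ arrays $(Az_k)$ and $(A\widehat{y}_k)$, where $\widehat{y}_k$ is defined so that $y_k = a_k z_k + b_k\widehat{y}_k$ for scalar coefficients $a_k,b_k$ maintained separately. Expanding $y_k = \tau z_{k-1} + (1-\tau)y_{k-1} + \frac{1}{n\alpha_k L}(z_k-z_{k-1})$ and using $z_{k-1} = z_k - \Delta z$ where $\Delta z$ is supported on coordinate $i$, one derives a recursion $a_k = \tau + (1-\tau)a_{k-1}$, $b_k = (1-\tau)b_{k-1}$, and $\widehat{y}_k = \widehat{y}_{k-1} - \frac{1}{b_k}\bigl(\tau + (1-\tau)a_{k-1} - \frac{1}{n\alpha_k L}\bigr)\Delta z$, so $\widehat{y}_k$ differs from $\widehat{y}_{k-1}$ only at coordinate $i$. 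Similarly $(Ax_k)_j$ can be recovered on demand from the maintained arrays via $(Ax_k)_j = \tau(Az_{k-1})_j + (1-\tau)\bigl(a_{k-1}(Az_{k-1})_j + b_{k-1}(A\widehat{y}_{k-1})_j\bigr)$, without ever materializing $x_k$.

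With these data structures, I would walk through the steps of Algorithm~\ref{alg:ascd}. Updating $\alpha_k$, $\tau$, the scalar pair $(a_k,b_k)$, and sampling $i$ all take $O(1)$. Computing $\nabla_i f_\mu(x_k) = 1 - \sum_j A_{ji}p_j(x_k)$ only needs $(Ax_k)_j$ for those $j$ with $A_{ji} > 0$, each of which costs $O(1)$ from the maintained arrays, for total cost $O(\text{nnz}(A_{:i}))$. The gradient truncation producing $\xi_k^{(i)}$ is $O(1)$. The mirror-descent step $z_k = \arg\min_{z\in\Delta}\{V_{z_{k-1}}(z) + \langle z, n\alpha_k \xi_k^{(i)}\rangle\}$ is a one-dimensional clipped closed-form update in coordinate $i$ because the Bregman divergence is separable in the $A$-norm, so it costs $O(1)$ to compute $z_k$ and $\Delta z$. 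The gradient-descent step reduces, under the $\widehat{y}$-representation, to a single scalar update of $\widehat{y}_k$ at coordinate $i$, again $O(1)$. Finally, we must refresh the auxiliary arrays: $(Az_k)_j \leftarrow (Az_{k-1})_j + A_{ji}\Delta z_i$ and $(A\widehat{y}_k)_j \leftarrow (A\widehat{y}_{k-1})_j + A_{ji}\Delta \widehat{y}_i$, each touching exactly the $j$ with $A_{ji}>0$, so the refresh costs $O(\text{nnz}(A_{:i}))$.

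Summing these contributions, the per-iteration cost is $O(\text{nnz}(A_{:i}))$ for the sampled $i$. Since $i$ is uniform on $[n]$, $\mathbb{E}[\text{nnz}(A_{:i})] = N/n$, giving the claimed expected $O(N/n)$ per iteration. The only slightly delicate point I expect is verifying that the projection in the mirror-descent step truly is coordinate-separable under the chosen $A$-norm proximal setup and that the box constraints defining $\Delta$ likewise decouple across coordinates; both follow directly from the diagonal form of $w(x) = \frac{1}{2}\|x\|_A^2$ and from $\Delta$ being a product of intervals, and at the end I would output $y_T = a_T z_T + b_T \widehat{y}_T$ in one final $O(n)$ expansion, which is dominated by the total running time.
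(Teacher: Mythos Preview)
Your proposal is correct and follows essentially the same approach as the paper's own proof: maintain $z_k$, an implicit representation $y_k = a_k z_k + b_k \widehat{y}_k$ via two scalars and an auxiliary vector, together with the precomputed arrays $Az_k$ and $A\widehat{y}_k$, so that every per-iteration read/write touches only the $O(\norm[0]{A_{:i}})$ entries in column $i$ and the expected cost is $O(N/n)$. Your recursion for $(a_k,b_k,\widehat{y}_k)$ matches the paper's $(B_{k,1},B_{k,2},y'_k)$ up to notation, and your handling of the gradient evaluation, the closed-form clipped mirror step, and the array refresh is the same as theirs.
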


\subsection{Mirror Descent Step}
\label{sxn:solver-mirror}

We now analyze the mirror descent step of Algorithm~\ref{alg:ascd}:
\[
z_k\leftarrow z^{(i)}_k\myeq \argmin_{z\in \Delta}\{V_{z_{k-1}}(z)+\dotp{z,n\alpha_k\xi^{(i)}_k}\}  .
\]
A lemma of the following form, which here applies to both covering and packing LPs, is needed, and it's proof follows from the textbook mirror descent analysis (or, e.g., Lemma $3.5$ in~\cite{AO15-stochastic}).
\begin{lemma}\label{lemma:mirror}
$\dotp{n\alpha_k\xi^{(i)}_k,z_{k-1}-u^*}\leq n^2\alpha_k^2L\dotp{\xi^{(i)},x_k-y^{(i)}_k}+V_{z_{k-1}}(u^*)-V_{z_k}(u^*)$
\end{lemma}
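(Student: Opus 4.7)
The plan is to follow the standard textbook mirror descent analysis, then use the definition of $y_k^{(i)}$ from the gradient descent step to convert the resulting bound into the coupling-friendly form stated in the lemma. First, I would exploit the optimality of $z_k$ as the minimizer over $\Delta$ of the convex function $z\mapsto V_{z_{k-1}}(z)+\langle z,n\alpha_k\xi_k^{(i)}\rangle$, writing down the first-order condition
\[
\langle n\alpha_k\xi_k^{(i)}+\nabla V_{z_{k-1}}(z_k),\,u^*-z_k\rangle\geq 0\quad\forall u^*\in\Delta,
\]
which rearranges to $\langle n\alpha_k\xi_k^{(i)},z_k-u^*\rangle\leq\langle\nabla V_{z_{k-1}}(z_k),u^*-z_k\rangle$. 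Since $z_{k-1}\in\Delta$ (by the invariant lemma stated earlier) and $u^*\in\Delta$, this is applicable.

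Next, I would invoke the three-point identity for Bregman divergences,
\[
\langle\nabla V_x(y),\,y-u\rangle=V_x(y)+V_y(u)-V_x(u),
\]
instantiated with $x=z_{k-1}$, $y=z_k$, $u=u^*$, to rewrite the right-hand side as $V_{z_{k-1}}(u^*)-V_{z_k}(u^*)-V_{z_{k-1}}(z_k)$. Combined with the previous display, this yields
\[
\langle n\alpha_k\xi_k^{(i)},\,z_k-u^*\rangle\leq V_{z_{k-1}}(u^*)-V_{z_k}(u^*)-V_{z_{k-1}}(z_k).
\]
Adding $\langle n\alpha_k\xi_k^{(i)},z_{k-1}-z_k\rangle$ to both sides to convert the left-hand side to the desired $z_{k-1}-u^*$, and then dropping the nonnegative term $-V_{z_{k-1}}(z_k)$ on the right, I get
\[
\langle n\alpha_k\xi_k^{(i)},\,z_{k-1}-u^*\rangle\leq\langle n\alpha_k\xi_k^{(i)},\,z_{k-1}-z_k\rangle+V_{z_{k-1}}(u^*)-V_{z_k}(u^*).
\]

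The final step is the key substitution: the gradient descent update $y_k^{(i)}=x_k+\frac{1}{n\alpha_k L}(z_k^{(i)}-z_{k-1})$ rearranges to $z_{k-1}-z_k=n\alpha_k L\,(x_k-y_k^{(i)})$, so
\[
\langle n\alpha_k\xi_k^{(i)},\,z_{k-1}-z_k\rangle=n^2\alpha_k^2 L\,\langle\xi^{(i)},\,x_k-y_k^{(i)}\rangle,
\]
yielding the claimed bound. I do not expect a real obstacle here: the argument is entirely routine given the three-point identity and the explicit relation between the mirror and gradient step lengths; the only subtlety is that one should note that this particular form of the bound avoids the usual AM-GM/Cauchy-Schwarz trick (which would introduce a $\|\xi^{(i)}\|_{A^*}^2/(2L)$ term) and instead keeps the step-length equality exact, which is precisely what the subsequent coupling analysis will need in order to cancel the mirror-descent regret against the gradient-descent improvement.
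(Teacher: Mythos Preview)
Your proposal is correct and follows essentially the same route as the paper: split $z_{k-1}-u^*$ through $z_k$, apply the first-order optimality condition for $z_k$, invoke the three-point identity for the Bregman divergence, drop the nonpositive term $-V_{z_{k-1}}(z_k)$, and then substitute the gradient-step relation $z_{k-1}-z_k=n\alpha_kL(x_k-y_k^{(i)})$. The only cosmetic quibble is your phrase ``nonnegative term $-V_{z_{k-1}}(z_k)$'': the Bregman divergence $V_{z_{k-1}}(z_k)$ is nonnegative, so the term you drop is nonpositive, which is indeed what makes the inequality go the right way.
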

\begin{proof}
The lemma follows from the following chain of equalities and inequalities.
\begin{align*}
\dotp{n\alpha_k\xi^{(i)}_k,z_{k-1}-u^*}&= \dotp{n\alpha_k\xi^{(i)}_k,z_{k-1}-z_k}+\dotp{n\alpha_k\xi^{(i)}_k,z_{k}-u^*}\\
&=n^2\alpha_k^2L\dotp{\xi^{(i)},x_k-y^{(i)}_k}+\dotp{n\alpha_k\xi^{(i)}_k,z_{k}-u^*}\\
&\leq n^2\alpha_k^2L\dotp{\xi^{(i)},x_k-y^{(i)}_k}+\dotp{-\nabla V_{z_{k-1}}(z^{(i)}_k),z_{k}-u^*}\\
&\leq n^2\alpha_k^2L\dotp{\xi^{(i)},x_k-y^{(i)}_k}+V_{z_{k-1}}(u^*)-V_{z^{(i)}_k}(u^*)-V_{z_{k-1}}(z^{(i)}_k) \\
&\leq n^2\alpha_k^2L\dotp{\xi^{(i)},x_k-y^{(i)}_k}+V_{z_{k-1}}(u^*)-V_{z_k}(u^*)   .
\end{align*}
The first equality follows by adding and subtracting $z_k$, and the second equality comes from the gradient step $y^{(i)}_k= x_k+\frac{1}{n\alpha_k L}(z^{(i)}_k-z_{k-1})$. 
The first inequality is due to the the minimality of $z^{(i)}_k$, which gives
\[
\dotp{\nabla V_{z_{k-1}}(z^{(i)}_k)+n\alpha_k\xi^{(i)}_k,u-z_{k}}\geq 0 \quad \forall u\in\Delta   ,
\]
the second inequality is due to the standard three point property of Bregman divergence, that is $\forall x,y\geq 0$
\[
\dotp{-\nabla V_x(y),y-u}=V_x(u)-V_y(u)-V_x(y)   ,
\]
and the last inequality just drops the term $-V_{z_k}(u^*)$, which is always negative.
\end{proof}

Also, we note that the mirror descent step, defined above in a variational way, can be explicitly written as
\begin{enumerate}
\item $z^{(i)}_k\leftarrow z_{k-1}$
\item $z^{(i)}_k\leftarrow z^{(i)}_k-n\alpha_k\xi^{(i)}_k/\maxAi$
\item If $z^{(i)}_{k,i}<0,z^{(i)}_{k,i}\leftarrow 0$; if $z^{(i)}_{k,i}>3/\maxAi,z^{(i)}_{k,i}\leftarrow 3/\maxAi$.
\end{enumerate}
This is invariant to the difference of packing and covering, and so it follows directly from Proposition $3.6$ in~\cite{AO15-stochastic}. 
It is fairly easy to derive, and so we omit the proof.

\subsection{Gradient Descent Step}
\label{sxn:solver-gradient}

We now analyze the gradient descent step of Algorithm~\ref{alg:ascd}.
In particular, from the explicit formulation of the mirror descent step, we have that $|z^{(i)}_{k,i}-z_{k-1,i}|\leq \frac{n\alpha_k|\xi^{(i)}_k|}{\maxAi}$, which gives 
$$
|y^{(i)}_{k,i}-x_{k,i}|=\frac{1}{n\alpha_kL}|z^{(i)}_{k,i}-z_{k-1,i}|\leq \frac{|\xi^{(i)}_k|}{L\maxAi}   .
$$ 
The gradient step we take is within the local region, and so Lemma~\ref{lemma:Lipschitz} applies. 
We bound the improvement from the gradient descent step in the following lemma, which is symmetric\footnote{The symmetry is between Lemma $2.6$ in~\cite{AO15-stochastic} and Lemma~\ref{lemma:Lipschitz}, as the gradient descent improvement follows directly from the corresponding Lipschitz properties. The actual improvement guarantee is the same as Lemma $3.8$ in~\cite{AO15-stochastic}.} to Lemma $3.8$ in~\cite{AO15-stochastic}.

\begin{lemma}\label{lemma:gradient}
$f_\mu(x_k)-f_\mu(y^{(i)}_k)\geq \frac{1}{2}\dotp{\nabla f_\mu(x_k),x_k-y^{(i)}_k}$
\end{lemma}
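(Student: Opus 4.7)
The plan is to exploit the fact that $y^{(i)}_k$ differs from $x_k$ only in the $i$-th coordinate, so the gap $f_\mu(x_k)-f_\mu(y^{(i)}_k)$ is controlled by a one-dimensional integral of the partial derivative $\nabla_i f_\mu$ along the segment between them. Setting $\gamma \myeq y^{(i)}_{k,i}-x_{k,i}$, the fundamental theorem of calculus gives
\[
f_\mu(y^{(i)}_k) - f_\mu(x_k) \;=\; \int_0^{\gamma} \nabla_i f_\mu(x_k+s\,\e_i)\,ds ,
\]
and the target inequality is equivalent to showing that this integral is at most $\tfrac{1}{2}\gamma\,\nabla_i f_\mu(x_k)$.

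Before bounding the integral, I will record two structural facts. First, from the explicit projection form of the mirror descent step (given just after Lemma~\ref{lemma:mirror}), the displacement $z^{(i)}_{k,i}-z_{k-1,i}$ has the opposite sign from $\xi^{(i)}_k$, so $\gamma$ and $\xi^{(i)}_k$ have opposite signs, i.e.\ $\gamma\cdot\xi^{(i)}_k\leq 0$. Second, the same explicit form gives $|z^{(i)}_{k,i}-z_{k-1,i}| \leq n\alpha_k|\xi^{(i)}_k|/\maxAi$, so
\[
|\gamma| \;=\; \tfrac{1}{n\alpha_k L}|z^{(i)}_{k,i}-z_{k-1,i}| \;\leq\; \frac{|\xi^{(i)}_k|}{L\maxAi} \;\leq\; \frac{1}{L\maxAi} ,
\]
which is exactly the step length for which Lemma~\ref{lemma:Lipschitz} applies.

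I will then split into cases on the size of $g\myeq\nabla_i f_\mu(x_k)$. In the small-gradient case $g\in[-1,1]$, one has $\xi^{(i)}_k=g$, and Lemma~\ref{lemma:Lipschitz}(1) yields $|\nabla_i f_\mu(x_k+s\e_i)-g|\leq L\maxAi|s|$. Integrating this bound over $[0,\gamma]$ gives
\[
\int_0^{\gamma}\nabla_i f_\mu(x_k+s\e_i)\,ds \;\leq\; g\gamma + \tfrac{1}{2}L\maxAi\,\gamma^{2} ,
\]
and combining $|\gamma|\leq |g|/(L\maxAi)$ with $\gamma g\leq 0$ turns the quadratic correction term into at most $-\tfrac12\gamma g$, proving the desired $\tfrac12\gamma g$ upper bound. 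In the truncation case $g<-1$, one has $\xi^{(i)}_k=-1$ and $\gamma\geq 0$, so Lemma~\ref{lemma:Lipschitz}(2) gives $\nabla_i f_\mu(x_k+s\e_i)\leq (1-\tfrac12 L\maxAi s)\,g$; integrating and using $L\maxAi\gamma\leq 1$ yields the same $\tfrac12\gamma g$ bound. The remaining case $g>1$ does not occur because Lemma~\ref{lemma:smoothing_properties}(5) shows $\nabla_i f_\mu\leq 1$ for all $x\geq 0$.

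The main obstacle is purely bookkeeping with signs: one must verify in each case that the quadratic Lipschitz correction is absorbed by the linear term, using the fact that the mirror-descent step length $|\gamma|$ is bounded by $|\xi^{(i)}_k|/(L\maxAi)$ rather than merely $1/(L\maxAi)$. That sharper bound, together with $\xi^{(i)}_k\gamma\leq 0$, is what buys the factor of $\tfrac12$ on the right-hand side. No new analytic idea beyond Lemma~\ref{lemma:Lipschitz} is needed.
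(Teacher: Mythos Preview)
Your proposal is correct and follows essentially the same approach as the paper: write the gap as a one-dimensional integral in coordinate $i$, invoke Lemma~\ref{lemma:Lipschitz} using the step bound $|\gamma|\le|\xi^{(i)}_k|/(L\maxAi)$, and split on whether $\nabla_i f_\mu(x_k)$ lies in $(-1,1)$ or is $\le -1$. Your explicit observation that $\gamma\cdot\xi^{(i)}_k\le 0$ is exactly the fact the paper uses (implicitly) when it collapses $-g\gamma-\tfrac12|g||\gamma|$ to $-\tfrac12 g\gamma$, so the arguments are the same up to presentation; just note that your case boundaries should match Lemma~\ref{lemma:Lipschitz}'s open interval $(-1,1)$, handling $g=-1$ via part~(2).
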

\begin{proof}
Since $x_k$ and $y^{(i)}_k$ differ only at coordinate $i$, denote $\gamma=y^{(i)}_{k,i}-x_{k,i}$, we have
\[
f_\mu(x_k)-f_\mu(y^{(i)}_k)=f_\mu(x_k)-f_\mu(x_k+\gamma \e_i)=\int_0^\gamma -\nabla_i f_\mu(x_k+\nu\e_i) d\nu   .
\]
Since $\gamma$ satisfies $|\gamma|\leq \frac{|\xi^{(i)}_k|}{L\maxAi}\leq \frac{1}{L\maxAi}$, we can apply Lemma~\ref{lemma:Lipschitz}.
There are two cases to consider.

If $\nabla_i f_\mu(x_k)\in (-1,1)$, then we have $|\gamma|\leq \frac{|\xi^{(i)}_k|}{L\maxAi} = \frac{|\nabla_if_\mu(x_k)|}{L\maxAi}$, and by Lemma~\ref{lemma:Lipschitz} we have $-\nabla_i f_\mu(x_k+\nu\e_i) \geq -\nabla_if_\mu(x_k)-L\maxAi|\nu|$ in the above integration. 
Thus,
\begin{align*}
f_\mu(x_k)-f_\mu(y^{(i)}_k)&\geq \int_0^\gamma -\nabla_i f_\mu(x_k+\nu\e_i) d\nu\\
&\geq \int_0^\gamma -\nabla_if_\mu(x_k)-L\maxAi|\nu| d\nu\\
&=-\nabla_if_\mu(x_k)\gamma-\frac{L\maxAi}{2}\gamma^2 \\
&\geq -\nabla_if_\mu(x_k)\gamma-\frac{L\maxAi}{2}|\gamma|\frac{|\nabla_if_\mu(x_k)|}{L\maxAi} \\
&=-\frac{1}{2}\dotp{\nabla_if_\mu(x_k),\gamma}=\frac{1}{2}\dotp{\nabla f_\mu(x_k),x_k-y^{(i)}_k}   .
\end{align*}

If $\deli{x_k}\leq -1$, then again by Lemma~\ref{lemma:Lipschitz} we have $-\deli{x_k+\nu\e_i}\geq -(1-\frac{L\maxAi}{2}|\nu|)\deli{x_k}\geq -\frac{1}{2}\deli{x_k}$.
Thus,
\begin{align*}
f_\mu(x_k)-f_\mu(y^{(i)}_k)&\geq \int_0^\gamma -\nabla_i f_\mu(x_k+\nu\e_i) d\nu\\
&\geq \int_0^\gamma -\frac{1}{2}\deli{x_k}d\nu = \frac{1}{2}\dotp{\nabla f_\mu(x_k),x_k-y^{(i)}_k}   .
\end{align*}
\end{proof}

\subsection{Coupling of Gradient and Mirror Descent}
\label{sxn:solver-coupling}

Here, we will analyze the coupling between the gradient descent and mirror descent steps.
This and the next section will give a proof of Theorem~\ref{thm:main}. 

As we take steps on random coordinates, we will write the full gradient as
\[
\nabla f_\mu(x_k)=\E_i[n\deli{x_k}]=\E_i[n\eta^{(i)}_k+n\xi^{(i)}_k]   .
\]
As discussed earlier, we have the small component $\xi^{(i)}_k\in (-1,1)\e_i$ and the large component $\eta^{(i)}_k=\deli{x_k}-\xi^{(i)}_k\in (-\infty,0]\e_i$. 
We put the gradient and mirror descent steps together, and we bound the gap to optimality at iteration $k$:
\begin{align*}
\alpha_k(f_\mu(x_k)-f_\mu(u^*))
\leq& \dotp{\alpha_k\nabla f_\mu(x_k),x_k-u^*}\\
=&\dotp{\alpha_k\nabla f_\mu(x_k),x_k-z_{k-1}}+\dotp{\alpha_k\nabla f_\mu(x_k),z_{k-1}-u^*}\\
=&\dotp{\alpha_k\nabla f_\mu(x_k),x_k-z_{k-1}}+\E_i[\dotp{n\alpha_k\eta^{(i)}_k,z_{k-1}-u^*}+\dotp{n\alpha_k\xi^{(i)}_k,z_{k-1}-u^*}]\\
=&\frac{1-\tau}{\tau}\alpha_k\dotp{\nabla f_\mu(x_k),y_{k-1}-x_k}
  +\E_i[\dotp{n\alpha_k\eta^{(i)}_k,z_{k-1}-u^*}] \\
  &+\E_i[\dotp{n\alpha_k\xi^{(i)}_k,z_{k-1}-u^*}]\\
\leq & \frac{1-\tau}{\tau}\alpha_k(f_\mu(y_{k-1})-f_\mu(x_k))
  +\E_i[\dotp{n\alpha_k\eta^{(i)}_k,z_{k-1}-u^*}]\\
  &+\E_i[n^2\alpha_k^2L\dotp{\xi^{(i)}_k,x_k-y^{(i)}_k}+V_{z_{k-1}}(u^*)-V_{z^{(i)}_k}(u^*)]  .
\end{align*}
The first line is due to convexity. 
The next two lines just break and regroup the terms. 
The fourth line is due to $x_k=\tau z_{k-1}+(1-\tau)y_{k-1}$, so $\tau(x_k-z_{k-1})=(1-\tau)(y_{k-1}-x_k)$. 
The last line is by Lemma~\ref{lemma:mirror}.

We try to use the improvement from the gradient step given in Lemma~\ref{lemma:gradient} to cover the loss from $\eta^{(i)}_k$, and the regret from the mirror descent step:
\begin{equation}\label{lossregret}
\underbrace{\E_i[\dotp{n\alpha_k\eta^{(i)}_k,z_{k-1}-u^*}]}_{\text{loss from }\eta^{(i)}_k}+\underbrace{\E_i[n^2\alpha_k^2L\dotp{\xi^{(i)}_k,x_k-y^{(i)}_k}]}_{\text{regret from mirror descent}}  ,
\end{equation}
and we will use the fact $z_{k-1},z^{(i)}_k,u^*\in \Delta$. 
Consider the following cases.
\begin{enumerate}
\item $\eta^{(i)}_k=0$: In this case, the loss term is $0$. 
We only need to worry about the regret term, and by Lemma~\ref{lemma:gradient}
\[
n^2\alpha_k^2L\dotp{\xi^{(i)}_k,x_k-y^{(i)}_k}\leq 2n^2\alpha_k^2L(f_\mu(x_k)-f_\mu(y^{(i)}_k))   .
\]
\item $\eta^{(i)}_k<0,z^{(i)}_{k,i}<\frac{3}{\maxAi}$: In this case, we increased the $i$-th variable in both the gradient and mirror descent step, and because $z^{(i)}_{k,i}$ is inside $\Delta$ without any projection, we know the step length of gradient descent is exactly $y^{(i)}_{k,i}-x_{k,i}=\frac{1}{n\alpha_kL}\frac{n\alpha_k}{\maxAi}=\frac{1}{L\maxAi}$, together with $z_{k-1}\geq 0$, and $u^*_i\leq \frac{3}{\maxAi}$, we have
\[
\dotp{n\alpha_k\eta^{(i)}_k,z_{k-1}-u^*}\leq \dotp{n\alpha_k\eta^{(i)}_k,-u^*}\leq -n\alpha_k\deli{x_k}\frac{3}{\maxAi} = 3n\alpha_kL\dotp{\nabla f_\mu(x_k),x_k-y^{(i)}_k}   ,
\]
and 
\begin{align*}
\dotp{n\alpha_k\eta^{(i)}_k,z_{k-1}-u^*}+n^2\alpha_k^2L\dotp{\xi^{(i)}_k,x_k-y^{(i)}_k}
\leq &(3n\alpha_kL+n^2\alpha_k^2L)\dotp{\nabla f_\mu(x_k),x_k-y^{(i)}_k}\\
\leq &(6n\alpha_kL+2n^2\alpha_k^2L)(f_\mu(x_k)-f_\mu(y^{(i)}_k))   .
\end{align*}
The last step is by Lemma~\ref{lemma:gradient}.
\item $\eta^{(i)}_k<0,z^{(i)}_{k,i}=\frac{3}{\maxAi}$: In this case, as we know $u^*_i\leq \frac{3}{\maxAi}$, we have
\[
\dotp{n\alpha_k\eta^{(i)}_k,z_{k-1}-u^*}\leq \dotp{n\alpha_k\eta^{(i)}_k,z_{k-1}-z^{(i)}_k}=n^2\alpha_k^2L\dotp{\eta^{(i)}_k,x_{k}-y^{(i)}_k}   ,
\]
and
\begin{align*}
\dotp{n\alpha_k\eta^{(i)}_k,z_{k-1}-u^*}+n^2\alpha_k^2L\dotp{\xi^{(i)}_k,x_k-y^{(i)}_k}
\leq &2n^2\alpha_k^2L\dotp{\nabla f_\mu(x_k),x_k-y^{(i)}_k}\\
\leq &4n^2\alpha_k^2L(f_\mu(x_k)-f_\mu(y^{(i)}_k))   .
\end{align*}
Again, the last step is due to Lemma~\ref{lemma:gradient}.
\end{enumerate}
Since $n\alpha_k<1$ for all $k$, we have in all above cases,
\[
\E_i[\dotp{n\alpha_k\eta^{(i)}_k,z_{k-1}-u^*}]+\E_i[n^2\alpha_k^2L\dotp{\xi^{(i)}_k,x_k-y^{(i)}_k}]
\leq \E_i[8n\alpha_kL(f_\mu(x_k)-f_\mu(y^{(i)}_k))]   .
\]
Back to our earlier derivation, we have
\begin{align*}
\alpha_k(f_\mu(x_k)-f_\mu(u^*))
\leq &\frac{1-\tau}{\tau}\alpha_k(f_\mu(y_{k-1})-f_\mu(x_k))
  +\E_i[\dotp{n\alpha_k\eta^{(i)}_k,z_{k-1}-u^*}]\\
 &+\E_i[n^2\alpha_k^2L\dotp{\xi^{(i)}_k,x_k-y^{(i)}_k}+V_{z_{k-1}}(u^*)-V_{z^{(i)}_k}(u^*)]\\
\leq &\frac{1-\tau}{\tau}\alpha_k(f_\mu(y_{k-1})-f_\mu(x_k))
  +\E_i[8n\alpha_kL(f_\mu(x_k)-f_\mu(y^{(i)}_k)] \\
  &+\E_i[V_{z_{k-1}}(u^*)-V_{z^{(i)}_k}(u^*)]   .
\end{align*}
With our choice of $\tau=\frac{1}{8nL},\alpha_k=\frac{1}{1-\tau}\alpha_{k-1}$, we have
\[
-\alpha_kf_\mu(u^*)\leq 8nL\alpha_{k-1}f_\mu(y_{k-1})-\E_i[8nL\alpha_kf_\mu(y^{(i)}_k)]+\E_i[V_{z_{k-1}}(u^*)-V_{z^{(i)}_k}(u^*)]   .
\]
Telescoping the above inequality along $k=1,\ldots,T$, we get
\[
\E_i[8nL\alpha_Tf_\mu(y_T)]\leq \sum_{k=1}^T\alpha_kf_\mu(u^*) + 8nL\alpha_0f_\mu(y_0)+V_{z_0}(u^*)  ,
\]
and thus
\[
\E_i[f_\mu(y_T)]\leq \frac{\sum_{k=1}^T\alpha_k}{8nL\alpha_T}f_\mu(u^*)+\frac{\alpha_0}{\alpha_T}f_\mu(y_0)+\frac{1}{8nL\alpha_T}V_{z_0}(u^*)   .
\]
We have $\sum_{k=1}^T\alpha_k=\alpha_T\sum_{k=0}^{T-1}(1-\frac{1}{8nL})^k=8nL\alpha_T(1-(1-\frac{1}{8nL})^T)\leq 8nL\alpha_T$, and by our choice of $T=\ceil{8nL\log(1/\epsilon)}$, we also have 
\[
\frac{\alpha_0}{\alpha_T}=(1-\frac{1}{8nL})^T\leq \epsilon,\frac{1}{8nL\alpha_T}\leq \frac{\epsilon}{8nL\alpha_0}= \frac{\epsilon}{8}   ,
\]
and thus
\[
\E_i[f_\mu(y_T)]\leq f_\mu(u^*)+\epsilon f_\mu(y_0)+\frac{\epsilon}{8}V_{z_0}(u^*)   .
\]

\subsection{Finding a Good Starting Point}
\label{sxn:solver-starting}

Here, we will describe how to find a good starting point for the algorithm.
This will permit us to establish the quality-of-approximation and running time guarantees of Theorem~\ref{thm:main}. 

A good starting point $y_0=x^{\start}$ for Algorithm~\ref{alg:ascd} is an initial condition $x^{\start}$ that is not too far away from the optimal in terms of the function value (i.e small $f_\mu(y_0)$), and not too far away from $u^*$ in $A$-norm (i.e. small $V_{z_0}(u^*)$). 
For packing problems, starting with all the all-$0$'s vector will work, but this will not work for covering problems.
Instead, for covering problems, we will show now a good enough $x^{\start}$ can be obtained in $\tilde{O}(N)$. 

To do so, recall that we can get a $2$-approximation $x^\#$ to the original covering LP in time $\tilde{O}(N)$ using various nearly linear time covering solvers, e.g., those of~\cite{KY14,Young14}. 
Without loss of generality, we can assume $x_i^\#\in [0,\frac{2}{\maxAi}]$, since we can use the diameter reduction process as specified in Lemma~\ref{lemma:equivLP} to get a equivalent solution satisfying the conditions. 
Then, we have the following lemma.
\begin{lemma}
Let $x^{\start}=(1+\epsilon/2)x^\#$, we have $x^{\start}\in \Delta$, $f_\mu(x^{\start})\leq 4\opt$, and  $V_{x^{\start}}(u^*)\leq 6\opt$
\end{lemma}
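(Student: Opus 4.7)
The plan is to verify each of the three claims in turn, using only: (a) the fact that $x^\#$ is a $2$-approximation so $Ax^\#\geq\vone$ and $\vone^T x^\#\leq 2\opt$, (b) the coordinate-wise bound $x_i^\#\leq 2/\maxAi$ (assumed WLOG via Lemma~\ref{lemma:equivLP}), and (c) the facts about $u^*$ and $f_\mu$ already established in Lemma~\ref{lemma:smoothing_properties} and the definition of $\Delta$.

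\textbf{Membership in $\Delta$.} This is the easy warm-up. For each coordinate, $x^{\start}_i = (1+\epsilon/2)x_i^\# \leq (1+\epsilon/2)\cdot 2/\maxAi \leq 3/\maxAi$ provided $\epsilon\leq 1$, and nonnegativity is immediate.

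\textbf{Function value bound $f_\mu(x^{\start})\leq 4\opt$.} The linear part is direct: $\vone^T x^{\start} = (1+\epsilon/2)\vone^T x^\# \leq (1+\epsilon/2)\cdot 2\opt \leq 3\opt$. For the penalty part, first observe that since $Ax^\#\geq \vone$ and $x^{\start} = (1+\epsilon/2)x^\#$, we have $Ax^{\start}\geq(1+\epsilon/2)\vone$, hence $1-(Ax^{\start})_j\leq -\epsilon/2$ for every $j$. Then
\[
p_j(x^{\start}) \;=\; \exp\!\left(\tfrac{1}{\mu}(1-(Ax^{\start})_j)\right) \;\leq\; \exp\!\left(-\tfrac{\epsilon}{2\mu}\right) \;=\; \exp(-2\log(nm/\epsilon)) \;=\; (\epsilon/(nm))^2,
\]
using the choice $\mu=\epsilon/(4\log(nm/\epsilon))$. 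So $\mu\sum_j p_j(x^{\start})\leq m\mu(\epsilon/(nm))^2$, which is vanishingly small, and in particular $\leq \opt$ since $\opt\geq 1$. Adding the two parts gives $f_\mu(x^{\start})\leq 3\opt + o(1) \leq 4\opt$.

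\textbf{Bregman divergence bound $V_{x^{\start}}(u^*)\leq 6\opt$.} By definition $V_{x^{\start}}(u^*)=\tfrac12\sum_i \maxAi\,(x^{\start}_i-u^*_i)^2$. The key trick is that both $x^{\start}_i$ and $u^*_i$ lie in $[0,3/\maxAi]$ (the latter from~\eqref{eq:diameter} scaled by $(1+\epsilon/2)$), so
\[
(x^{\start}_i-u^*_i)^2 \;\leq\; \max(x^{\start}_i,u^*_i)\cdot |x^{\start}_i - u^*_i| \;\leq\; \tfrac{3}{\maxAi}\,(x^{\start}_i + u^*_i).
\]
Summing, the $\maxAi$ factors cancel and we obtain $V_{x^{\start}}(u^*)\leq \tfrac32(\vone^T x^{\start} + \vone^T u^*) \leq \tfrac32\bigl(3\opt + (1+\epsilon/2)\opt\bigr) \leq 6\opt$ for sufficiently small $\epsilon$.

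I do not expect any real obstacle; the whole argument is a short calculation that hinges on two qualitative points: the exponential smallness of the penalty at an overcovered point (which is what forces the $(1+\epsilon/2)$ scaling of $x^\#$) and the elementary $|a-b|^2 \leq \max(a,b)(a+b)$ telescoping once both iterates live in the reduced-diameter box. The only conceptual subtlety worth flagging in the writeup is why we may assume $x_i^\# \leq 2/\maxAi$; this is exactly the content of the diameter reduction (Lemma~\ref{lemma:equivLP}), applied here as a preprocessing step on the approximate solution returned by the black-box nearly linear time covering solver.
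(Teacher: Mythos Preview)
Your proof is correct and follows essentially the same approach as the paper's own proof: identical arguments for membership in $\Delta$ and the function value bound, and for the Bregman divergence bound you use $(a-b)^2\leq \max(a,b)\,|a-b|\leq \tfrac{3}{\maxAi}(a+b)$ where the paper instead drops the cross term via $(a-b)^2\leq a^2+b^2$ and then bounds $\maxAi\, a^2\leq 3a$, arriving at the same $\tfrac{3}{2}(\vone^T x^{\start}+\vone^T u^*)$.
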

\begin{proof}
It is obvious that $x^{\start}\in \Delta$.
Thus,
\[
\vone^Tx^{\start}=(1+\epsilon/2)\vone^Tx^{\#}\leq (1+\epsilon/2)2\opt \leq 3\opt   .
\]
Furthermore, we have $Ax^{\start}-\vone\geq (1+\epsilon/2)Ax^\#-\vone\geq \frac{\epsilon}{2}\vone$, and so
\[
f_\mu(x^{\start})=\mu\sum_j p_j(x^{\start})+\vone^Tx^{\start}\leq \mu\sum_j\exp(-\frac{\epsilon/2}{\mu})+3\opt\leq \frac{\mu m}{(nm)^2}+3\opt<4\opt   .
\]
For the divergence, we have that
\begin{align*}
V_{x^{\start}}(u^*)=&\frac{1}{2}\sum_i\maxAi (x^{\start}_i-u^*_i)^2\\
= & \frac{1}{2}\sum_i\maxAi ((x^{\start}_i)^2+(u^*)^2_i-2x^{\start}_iu^*_i)\\
\leq & \frac{3}{2}\sum_i x^{\start}_i + u^*_i\\
\leq & \frac{3}{2}(3\opt+\opt) \leq 6\opt  ,
\end{align*}
which proves the lemma.
\end{proof}

\noindent
It is now clear that we have
\[
\E_i[f_\mu(y_T)]\leq f_\mu(u^*)+\epsilon f_\mu(y_0)+\frac{\epsilon}{8}V_{z_0}(u^*)\leq (1+\epsilon)\opt+4\epsilon\opt+\epsilon\opt=(1+6\epsilon)\opt.
\]
Thus, we have the approximation guarantee in Theorem~\ref{thm:main}. The running time follows directly from Lemma~\ref{lemma:periteration} and $T=\tilde{O}(n/\epsilon)$.

\vspace{0.4in}
\noindent
\textbf{Acknowledgments.}
DW was supported by ARO Grant W911NF-12-1-0541, SR was funded by NSF Grant CCF-1118083, and MM acknowledges the support of the NSF, AFOSR, and DARPA.
\vspace{0.1in}

\begin{appendices}

\section{Missing Proofs}
\label{App:proofs}
The following proofs can be found in~\cite{AO15-stochastic}, and we include them here for completeness.
\optrange*
\begin{proof}
By the assumption $\min_{j\in [m]}\norm[\infty]{A_{j:}}=1$, we know at least one constraint has all coefficients at most $1$, so to satisfy that constraint, we must have the sum of the variables to be at least $1$. On the other hand, since each constraint has a variable with coefficient at least $1$ in it, $x=\vone$ clearly satisfies all constraints, so $\opt\leq m$.
\end{proof}
\smoothing*
\begin{proof}
\begin{enumerate}
\item Since $Ax^*\geq \vone$, and $u^*=(1+\epsilon/2)x^*$, we have $(Au^*)_j-1\geq\epsilon/2$ for all $j$. Then $p_j(u^*)\leq \exp(-\frac{1}{\mu}\frac{\epsilon}{2})=(\frac{\epsilon}{mn})^2$, and $f_\mu(u^*)=\vone^Tu^*+\mu\sum_{j=1}^mp_j(u^*)\leq (1+\epsilon/2)\opt+\mu m (\frac{\epsilon}{mn})^2\leq (1+\epsilon)\opt$.
\item By contradiction, suppose $f_\mu(x)<(1-\epsilon)\opt$, since $f_\mu(x)<\opt\leq m$, we must have $p_j(x)<m/\mu$ for any $j$, which implies $(Ax)_j\geq 1-\epsilon$. By definition of $\opt$, we have $\vone^Tx \geq (1-\epsilon)\opt$, since $Ax\geq (1-\epsilon)\vone$. This gives a contradiction as $f_\mu(x)>\vone^Tx\geq (1-\epsilon)\opt$.
\item By contradiction, suppose there is some $j$ such that $(Ax)_j-1\leq -\epsilon$, then as in the last part, we have $\mu p_j(x)\geq \mu (\frac{mn}{\epsilon})^4 >2\opt$, contradicting $f_\mu(x)\leq 2\opt$.
\item For any $x$ satisfying $f_\mu(x)\leq (1+O(\epsilon))\opt\leq 2\opt$, by last part we know $Ax\geq (1-\epsilon)\vone$, so $A(\frac{1}{1-\epsilon}x)\geq \vone$. We also have $\vone^T(\frac{1}{1-\epsilon}x)=\frac{1}{1-\epsilon}\vone^Tx<\frac{1}{1-\epsilon}f_\mu(x)\leq (1+O(\epsilon))\opt$.
\item This is by straightforward computation.
\end{enumerate}
\end{proof}
\validstep*
\begin{proof}
At the start $x_0=y_0=z_0=x^{\start}\in \Delta$ by assumption. $z_k$ is always in $\Delta$ as we take the projection in the mirror descent step. If we can further show $y_k\in \Delta$ for all $k$, we are done, since $x_k$ is a convex combination of $y_{k-1},z_{k-1}$. To show $y_k\in \Delta$, we write $y_k$ as a convex combination of $z_0,\ldots,z_k$, $y_k=\sum_{l=0}^kc^l_kz_l$.
At $k=0$, we have $y_0=z_0$, and at $k=1$, $y_1=x_1+\frac{1}{n\alpha_1L}(z_1-z_0)=\frac{1}{n\alpha_1L}z_1+(1-\frac{1}{n\alpha_1L})z_0$, as $x_1=y_0=z_0$. For $k\geq 2$, we can verify
\[
c^l_k=\left\{\begin{array}{lr}
(1-\tau)c^l_{k-1} & l=0,\ldots,k-2\\
(\frac{1}{n\alpha_{k-1}L}-\frac{1}{n\alpha_kL})+\tau(1-\frac{1}{n\alpha_{k-1}L}) & l=k-1\\
\frac{1}{n\alpha_kL} & l=k \end{array} \right.
\]
since
\begin{align*}
y_k&=x_k+\frac{1}{n\alpha_kL}(z_k-z_{k-1})\\
&=\tau z_{k-1} + (1-\tau)y_{k-1} + \frac{1}{n\alpha_kL}(z_k-z_{k-1})\\
&=\tau z_{k-1} + (1-\tau)(\sum_{l=0}^{k-2}c^l_{k-1}z_l+\frac{1}{n\alpha_{k-1}L}z_{k-1}) + \frac{1}{n\alpha_kL}(z_k-z_{k-1})\\
&=(\sum_{l=0}^{k-2}(1-\tau)c^l_{k-1}z_l)+((\frac{1}{n\alpha_{k-1}L}-\frac{1}{n\alpha_kL})+\tau(1-\frac{1}{n\alpha_{k-1}L}))z_{k-1}+\frac{1}{n\alpha_kL}z_k
\end{align*}
As $\alpha_k\geq \alpha_{k-1}$, and $\alpha_0=\frac{1}{nL}$, we have $c^l_k\geq 0$ for all $l,k$, and it is easy to check the coefficients sum to $1$ for each $k$.
\end{proof}
\periteration*
\begin{proof}
We show how to implement a iteration conditioned on $i$ in time $O(\norm[0]{A_{:i}})$, where $\norm[0]{A_{:i}}$ is the number of non-zeros in column $i$, thus give a expected running time of $O(N/n)$ for each iteration. We maintain the following quantities
\[
z_k\in \R_{\geq 0}^n,az_k\in \R_{\geq 0}^m,y'_k\in \R^n,ay_k\in \R^m, B_{k,1},B_{k,2}\in \R_{+}
\]
with the following invariants always satisfied throughout the algorithm
\begin{equation}\label{eq:invar1}
Az_k=az_k
\end{equation}
\begin{equation}\label{eq:invar2}
y_k=B_{k,1}z_k+B_{k,2}y'_k,\quad Ay_k=B_{k,1}az_k+B_{k,2}ay_k
\end{equation}
When $k=0$, we let $az_k=Az_0,y'_k=y_0,ay_k=Ay_0,B_{k,1}=0,B_{k,2}=1$, and it is clear all the invariants are satisfied. For $k=1,2,\ldots,T$:
\begin{itemize}
\item The step $x_k=\tau z_{k-1}+(1-\tau)y_{k-1}$ does not need to be implemented.
\item Computation of $\nabla_i f(x_k)$ requires the value of $p_j(x_k)=\exp(\frac{1}{\mu}(1-(Ax_k)_j))$ for each $j$ such that $A_{ji}\neq 0$, and we can get the value
\[
(Ax_k)_j=\tau(Az_{k-1})_j+(1-\tau)(Ay_{k-1})_j=(\tau+(1-\tau)B_{k-1,1})(az_{k-1})_j+(1-\tau)B_{k-1,2}ay_{k-1,j}
\]
for each such $j$. This can be computed in $O(1)$ time for each $j$, and $O(\norm[0]{A_{:i}})$ time in total.
\item The mirror descent step $z^{(i)}_k\myeq \argmin_{z\in \Delta}\{V_{z_{k-1}}(z)+\dotp{z,n\alpha_k\xi^{(i)}_k}\}$ is simply $z_k=z_k+\delta \e_i$ where $\delta\in \R$ can be computed in $O(1)$ time. $z_k=z_{k-1}+\delta \e_i$ yields $y_k=\tau z_{k-1}+(1-\tau)y_{k-1}+\frac{\delta}{n\alpha_k L}\e_i$ by the gradient descent step. Therefore, we can update the values accordingly
\[
z_{k}\leftarrow z_{k-1}+\delta \e_i,\quad az_k\leftarrow az_{k-1}+\delta A_{:i}
\] and
\[
\begin{array}{ll}
B_{k,1}\leftarrow \tau+(1-\tau)B_{k-1,1} & B_{k,2}\leftarrow (1-\tau)B_{k-1,2}\\
y'_k\leftarrow y'_{k-1}+\delta(-\frac{B_{k,1}}{B_{k,2}}+\frac{1}{n\alpha_k L}\frac{1}{B_{k,2}}) \e_i & ay_k\leftarrow ay_{k-1}+\delta (-\frac{B_{k,1}}{B_{k,2}}+\frac{1}{n\alpha_k L}\frac{1}{B_{k,2}}) A_{:i}
\end{array}
\]
We can verify that after the updates, the invariants still hold
\begin{align*}
y_k=&B_{k,1}z_k+B_{k,2}y'_k=B_{k,1}(z_{k-1}+\delta \e_i)+B_{k,2}(y'_{k-1}+\delta(-\frac{B_{k,1}}{B_{k,2}}+\frac{1}{n\alpha_k L}\frac{1}{B_{k,2}}) \e_i)\\
=& B_{k,1}z_{k-1}+B_{k,2}(y'_{k-1}+\delta(\frac{1}{n\alpha_k L}\frac{1}{B_{k,2}})\e_i)\\
=& B_{k,1}z_{k-1}+B_{k,2}y'_{k-1}+\frac{\delta}{n\alpha_k L}\e_i\\
=&(\tau+(1-\tau)B_{k-1,1})z_{k-1}+((1-\tau)B_{k-1,2})y'_{k-1}++\frac{\delta}{n\alpha_k L}\e_i\\
=&\tau z_{k-1}+(1-\tau)y_{k-1}++\frac{\delta}{n\alpha_k L}\e_i
\end{align*}
It is also straightforward to verify $Ay_k=B_{k,1}az_k+B_{k,2}ay_k$ equals $Ay_k=\tau Az_{k-1}+(1-\tau)Ay_{k-1}++\frac{\delta}{n\alpha_k L}A\e_i$. The updates are dominated by the updates on $az_k$ and $ay_k$, which take $O(\norm[0]{A_{:i}})$ time.
\end{itemize}
\end{proof}
\end{appendices}
\bibliographystyle{plain}
\bibliography{cover}
\end{document}